\documentclass[creativecommons]{eptcs}

\providecommand{\doi}[1]{\textsc{doi}: \href{http://dx.doi.org/#1}{\nolinkurl{#1}}}

\usepackage{graphicx}
\usepackage{amssymb}
\usepackage{amsmath}
\usepackage{mathrsfs}
\usepackage{gastex}
\usepackage{paralist}
\usepackage{pgf}
\usepackage{tikz}
\usepackage{url}

\usetikzlibrary{automata}

\newcommand{\encode}{\mbox{enc}}

\newcommand{\set}[1]{\left\{ #1 \right\}}

\newcommand{\Nat}{\mathbb N}
\newcommand{\seq}[1]{\langle #1 \rangle}

\newcommand{\Real}{\mathbb R}
\newcommand{\Rplus}{{\mathbb R}^{+}}
\newcommand{\OptCost}{\mathsf{OptCost}}
\newcommand{\Runs}{\mathsf{Runs}}
\newcommand{\pricevar}{\mbox{price}}

\def\rmdef{\stackrel{\mbox{\rm {\tiny def}}}{=}} 
\renewcommand{\triangleq}{\rmdef}

\newtheorem{theorem}{Theorem}[section]
\newtheorem{lemma}[theorem]{Lemma}

\newenvironment{proof}[1][Proof]{\begin{trivlist}
\item[\hskip \labelsep {\bfseries #1}]}{\end{trivlist}}
\newenvironment{definition}[1][Definition]{\begin{trivlist}
\item[\hskip \labelsep {\bfseries #1}]}{\end{trivlist}}

\newcommand{\qed}{\nobreak \ifvmode \relax \else
      \ifdim\lastskip<1.5em \hskip-\lastskip
      \hskip1.5em plus0em minus0.5em \fi \nobreak
      \vrule height0.75em width0.5em depth0.25em\fi}

\begin{document}

\title{On Nonlinear Prices in Timed Automata}

\author{Devendra Bhave  
\institute{IIT Bombay, India}
\email{devendra@cse.iitb.ac.in}
\and Shankara Narayanan Krishna 
\institute{IIT Bombay, India}
\email{krishnas@cse.iitb.ac.in}
\and Ashutosh Trivedi
\thanks{This material is based upon work supported by DARPA under agreement
  number FA8750-15-2-0096 and by the US National Science Foundation (NSF) under
  grant numbers CPS-1446900. The U.S. Government is authorized to reproduce and
distribute reprints for Governmental purposes notwithstanding any copyright
notation thereon. All opinions expressed are those of the authors, and not
necessarily of the DARPA or NSF.} 
\institute{CU Boulder, USA}
\email{ashutosh.trivedi@colorado.edu}
}

\def\titlerunning{On Nonlinear Prices in Timed Automata}
\def\authorrunning{D. Bhave, S. N. Krishna \& A. Trivedi}

\maketitle

\begin{abstract}
  Priced timed automata provide a natural model for quantitative analysis of
  real-time systems and have been successfully applied in various scheduling and
  planning problems. 
  The optimal reachability problem for linearly-priced timed automata is known
  to be PSPACE-complete. 
  In this paper we investigate priced timed automata with more general
  prices and show that in the most general setting the optimal reachability
  problem is undecidable. 
  We adapt and implement the construction of
  Audemard, Cimatti, Kornilowicz, and Sebastiani
  for non-linear priced timed automata using state-of-the-art
  theorem prover Z3 and present some preliminary results.
\end{abstract} 

\section{Introduction}
Timed automata, introduced by Alur and Dill~\cite{journals/tcs/AlurD94},
extend finite state automata with continuous variables---referred as
clocks---that evolve with uniform rates.  
Time automata syntax permits comparing clocks with integers as guard on
transitions and as well as invariants on locations (states),  and also allows
clock resets as a way to remember the time a transition was last fired. 
These features of time automata are general enough to permit modeling rich timing
properties of real-time systems while providing a decidable verification
framework.
Timed automata have been quite successful in practice due to their appealing
theoretical properties as well as the presence of mature verification tools such
as UPPAAL.  

Priced timed automata
\cite{DBLP:conf/hybrid/BouyerBL04,Bouyer:2007:ORP:1288667.1288679} are
extensions of timed automata which permit us to  model cost associated with
staying at locations as well as taking discrete transitions. 
Priced timed automata are useful in modeling various decision-theoretic problem
in the presence of strict timing constraints. 
The most natural problem studied on these models is the optimal reachability problem
 (shortest path problem) where the goal is to find the minimum (or maximum) cost
to reach a given set of locations. 

Linearly-priced timed automata~\cite{BGA2001} (LPTA), also known as weighted
timed automata, are subclasses of priced timed automata where  prices change
linearly with respect to delay incurred at particular location.
For LPTA the optimal reachability problem is known to be decidable and is
shown to be PSPACE-complete exploiting a clever extension of region graphs to
so-called corner-point abstraction by Bouyer et
al.~\cite{Bouyer:2007:ORP:1288667.1288679}.  
Alur et al. \cite{Alur:2004:OPW:1026907.1026909} earlier gave an EXPTIME
algorithm to solve the problem with an arbitrary initial state by giving a
non-trivial extension of the region graph. 
Larsen et al. \cite{BGA2001,DBLP:conf/hybrid/BouyerBL04} gave a symbolic
algorithm to solve the problem, although with some restrictions on the initial
state (a corner state with all clocks set to zero).  
A recent result by Fearnley and Jurdzinksi~\cite{FJ13} showed that the
PSPACE-hardness results hold for timed automata with two clocks~\cite{FJ13}. 
On the other hand, for timed automata with one clock, reachability-time
and reachability-price problems are known to be NL-complete~\cite{LMS04}.

In practice, however, the requirement for nonlinear pricing models is quite
common.  
As an example consider the optimal scheduling problem of battery usage in
embedded systems studied by Jongerden et al.~\cite{5491259}.
In this work authors modeled batteries using kinetic battery model (KiBaM). 
KiBaM itself is a nonlinear model, but Jongerden et al.~\cite{5491259} manually
discretized it to required approximation to model the whole problem as
optimization on LPTA. 
Similar scenarios can be cited from other application domains of priced timed
automata such as scheduling~\cite{behrmann2005optimal}, resource modeling and
analysis~\cite{ivanov2010remes}, and optimal synthesis~\cite{jiang2012modeling}.
However, we believe that providing non-linear price modeling facilities directly
in the language of timed automata will further their applicability in system
design. 
Jurdzinski and Trivedi~\cite{DBLP:conf/formats/JurdzinskiT08} introduced a 
non-linear subclass of priced timed automata, so-called concavely-priced timed
automata,  where prices in each location are certain concave prices of valuation
and time delays.  
Exploiting the concave nature of the prices, they showed that the optimal price
reachability problem for this class of automata has the same complexity as that of
LPTA. 
Priced timed automata with exponential price functions were studied in a
restricted context by Bouyer et al. \cite{Bouyer:2010:TAO:1755952.1755963} 
and Fahrenberg and Larsen \cite{Fahrenberg2009179}. 

In this paper we uniformly study various subclasses of (non-linear) priced timed
automata,  and study the boundary between decidable and undecidable
variants of PTA. 
Towards this goal we first show the undecidability of the optimal reachability
problem for unrestricted priced timed automata by showing a reduction from the
halting problem for two-counter machines.
For reasoning with decidable variants, we first introduce a key notion of
price-preserving bisimilarity.
We exploit this notion to formalize reduction for the optimal cost reachability
problem for piecewise-linear priced timed automata to linearly priced timed
automata. 
We also show the decidability of $\varepsilon$-optimal cost reachability for
priced timed automata with Lipschitz-continuous prices.  
Finally, we adapt the construction of 
Audemard, Cimatti, Kornilowicz, and Sebastiani~\cite{bmc-ta} 
for bounded model-checking of timed automata using SAT solvers to
work for bounded reachability problem for non-linearly priced timed automata using
SMT solver Z3~\cite{de2008z3}. 
In conjunction with a decision procedure for the theory of the class of  price
functions (for instance polynomial prices~\cite{jovanovic2012solving,de2008z3}),
our implementation can be used to compute bounded-step cost-optimal schedules
for priced timed automata. 
We demonstrate the applicability of our approach using airplane landing
problem~\cite{behrmann2005optimal}. 

This paper is organized as follows: we begin by defining syntax and semantics 
of generalized priced timed automata in the next section. 
We define various pricing models and their hierarchy. We prove key
undecidability result in Section~\ref{sec:undec} and show decidability results
in Section~\ref{sec:dec}. 
Finally, in section~\ref{sec:tool} we present the details of our implementation
and experimental results.

\section{Priced Timed Automata}
We denote sets of integers, rational numbers and real numbers as $\mathbb{Z},
\mathbb{Q}$ and $\mathbb{R}$ respectively. Their respective non-negative
subsets are denoted as 
$\mathbb{Z}^+, \mathbb{Q}^+$ and $\mathbb{R}^+$.

Let $X = \{x_1, x_2, \ldots,
x_n\}$ be the finite set of clocks. A clock valuation is a map $\nu: X \mapsto
\mathbb{R}^+$. Thus, a given clock valuation  
$\nu$ maps clock $x_i$ to a value $\nu_i$. This fact is written as $\nu(x_i) =
\nu_i$. In $n$-tuple form, a clock valuation $\nu$ is denoted as $(\nu_1,
\nu_2, \ldots, \nu_n)$. Given a clock valuation  
$\nu$ and $\tau \in \mathbb{R}^+$, $\nu + \tau$ is the clock valuation defined by
$(\nu_1 + \tau, \nu_2 + \tau, \ldots, \nu_n + \tau)$. A guard is any finite
conjunction of clauses of the form $x_i \sim c$, 
where clock $x_i \in X$, constant $c \in \mathbb{Z}^+$ and $\sim$ is one of the
comparison operators in set $\{<, \leq, =, >, \geq\}$. Let $G$ be the set of
guards. Given a valuation $\nu$ and a guard $g = 
\bigwedge_{j} (x_i \sim c_j)$, $\nu \vDash g$ means expression $\bigwedge_{j}
(\nu(x_i) \sim c_j)$ evaluates to true. For $Y \subseteq X$, $\nu[Y := 0]$ denotes
clock valuation in which clocks in $Y$ are  reset to 0 while other clocks remain
unchanged. 

\paragraph{\bf Timed Automata.}
A timed transition system $\mathcal{T}$ is a tuple $(L, X, E)$ where 
\begin{inparaenum}[(i)]
\item $L$ is a finite set of locations, 
\item $X$ is a finite set of clocks variables, and 
\item $E$ is set of transitions.
\end{inparaenum}
A \emph{configuration} of $\mathcal{T}$ is a pair $(\mathcal{\ell}, \nu)$, where
$\mathcal{\ell} \in L$ is a location and $\nu$ in clock valuation over set $X$. 
Let $Q_{\mathcal{T}}$ be the set of configurations for the timed transition system
$\mathcal{T}$. 
There are two types of transitions over $Q_{\mathcal{T}}$:
\begin{itemize}
\item 
  Delay, $E^\tau \subseteq Q_{\mathcal{T}} \times \mathbb{R}^+
  \times Q_{\mathcal{T}}$: $(\mathcal{\ell}, \nu) \xrightarrow{t}
  (\mathcal{\ell}, \nu+t)$, where $t \in \mathbb{R}^+$ 
\item 
  Switch, $E^e \subseteq Q_{\mathcal{T}} \times 2^X \times
  Q_{\mathcal{T}}$: $(\mathcal{\ell}, \nu) \xrightarrow{Y} (\mathcal{\ell}',
  \nu[Y:=0])$ where $Y \subseteq X$.
\end{itemize}
We write $E = E^{\mathcal{\tau}} \cup E^{e}$ for the set of transitions of timed
transition system $\mathcal{T}$. 

\begin{definition}
 A timed automaton $\mathcal{A}$ is a tuple $(L, X, E, I)$ where 
 \begin{inparaenum}[(i)]
  \item $L$ is a finite  set of locations
  \item $X$ is a finite set of clocks
  \item $E \subseteq L \times G \times 2^X \times L$ is a finite set of edges
  \item $I: L \mapsto G$ assigns an invariant to each location.
 \end{inparaenum} 
\end{definition}
 The semantics of timed automaton $\mathcal{A}$ is given as a timed transition
 system $\mathcal{T}_\mathcal{A} = (L_{\mathcal{A}}, X_{\mathcal{A}},
 E_{\mathcal{A}})$, where  
 \begin{inparaenum}[(i)]
  \item $L_{\mathcal{A}} = L$
  \item $X_{\mathcal{A}} = X$
  \item $E_{\mathcal{A}} = E^{\tau}_{\mathcal{A}} \cup E^{e}_{\mathcal{A}}$, s.t.
 \end{inparaenum}
 \begin{itemize}
  \item $E^{\tau}_{\mathcal{A}} = \{(\mathcal{\ell}, \nu) \xrightarrow{t} (\mathcal{\ell}, \nu+t) \; | \; t \in \mathbb{R}^+ \; \mbox{and} \; \forall \delta \in \mathbb{R}^+, 0 \leq \delta \leq t \Rightarrow (\nu+\delta)
  \vDash I(\mathcal{\ell})\}$
  \item $E^{e}_{\mathcal{A}} = \{ (\mathcal{\ell}, \nu) \xrightarrow{Y} (\mathcal{\ell}', \nu[Y{:=0}]) \; | \; Y {\subseteq} X, \; (\mathcal{\ell}, g, Y, \mathcal{\ell}') \subseteq E, \; \nu \vDash g \; \mbox{and} \; \nu \vDash I(\mathcal{\ell}) \}$
 \end{itemize}

A \emph{run} $\rho = q_0 \rightarrow q_1 \rightarrow \cdots \rightarrow q_m$ of the
timed automaton $\mathcal{A}$ is a finite path in the induced timed transition
system $\mathcal{T}_\mathcal{A}$  where every $q_i$ is configuration in
$\mathcal{T}_\mathcal{A}$ and $\rightarrow$ is either delay or switch edge in
$E_\mathcal{A}$. 
We use notation $\rho = q_0 \rightsquigarrow q_m$ for a run from $q_0$ to $q_m$.
We write $\Runs(q, q')$ for the set of runs from the location $q$ to $q'$. 
A run is said to be \emph{canonical} if delay and switch transitions alternate.

 \paragraph{\bf Priced timed automata.}
 A priced timed automaton (PTA) is a timed automaton $\mathcal{A}$ = ($L$, $X$, $E$,
 $I$, $\pi$, $\psi$) augmented with a price functions $\pi: Q_{\mathcal{A}}
 \times \mathbb{R}^+\mapsto \mathbb{R}^+$ and $\psi: E \mapsto \mathbb{Z}^+$ 
 which assign prices (costs) for waiting at locations and taking edges, where
 $Q_{\mathcal{A}}$ is the set of the configurations of timed automaton
 $\mathcal{A}$. 
   Let $\mathcal{A}$ be a PTA and $\rho = q_0'
   \xrightarrow{\tau_1} q_1 \xrightarrow{e_1} q_1' \xrightarrow{\tau_2} q_2
   \xrightarrow{e_2} q_2' \cdots \xrightarrow{\tau_m} q_m \xrightarrow{e_m} 
  q_m'$ be a canonical run of $\mathcal{T}_\mathcal{A}$. 
  Then the cost $C(\rho)$ of the run $\rho$ is equal to $C_d(\rho) + C_s(\rho)$ where
 $ C_d(\rho) = \sum_{k = 1}^{m} \pi(q_{k-1}', \tau_k) \text{ and } C_s(\rho) =
  \sum_{k=1}^{m} \psi(e_k)$ are the \emph{duration} and \emph{switching} costs
  of $\rho$ respectively.  
  
  A priced timed transition system $\mathcal{T}$ is a tuple $(L, X, E)$ where
  \begin{inparaenum}[(i)]
  \item $L$ is a set of locations
  \item $X$ is a finite set of clocks
  \item $E$ is a set of transitions.
  \end{inparaenum}
  A \emph{configuration} is a tuple $(\ell, \nu, u)$, where $\ell \in L$ is a location, $\nu$ is a clock valuation over set $X$ and 
 $u \in \mathbb{R}$ is current accumulated price. Let $Q_{\mathcal{T}}$ be the set of configurations for
 timed transition system $\mathcal{T}$. There are two types of transitions defined over $Q_{\mathcal{T}}$:
 \begin{itemize}
  \item Delay, $E^\tau \subseteq Q_{\mathcal{T}} \times \mathbb{R}^+ \times Q_{\mathcal{T}}$: $(\mathcal{\ell}, \nu, u) \xrightarrow{t} (\mathcal{\ell}, \nu+t, u')$, where $t \in \mathbb{R}^+$
  \item Switch, $E^e \subseteq Q_{\mathcal{T}} \times 2^X \times Q_{\mathcal{T}}$: $(\mathcal{\ell}, \nu, u) \xrightarrow{Y} (\mathcal{\ell}', \nu[Y{:=}0], u')$ where $Y \subseteq X$.
 \end{itemize}
We write $E = E^{\mathcal{\tau}} \cup E^{e}$ for the set of transitions of timed transition system $\mathcal{T}$.
A priced timed transition system is said to be \emph{canonical} if for its every run, delay transitions and switch transitions occur 
in the strict alternation. Observe that two consecutive delay transitions like $(\ell_1, \nu_1, u_1) \xrightarrow{t_1} (\ell_2, \nu_2, u_2) \xrightarrow{t_2} (\ell_3, \nu_3, u_3)$
cannot be combined together as $(\ell_1, \nu_1, u_1) \xrightarrow{t_1 + t_2} (\ell_3, \nu_3, u_3)$ because for non-linear price functions
such clubbing may not yield valid transitions. 
Similarly, there cannot be consecutive switch transitions without zero delay transition 
between them. 

Let $\mathcal{A}$ = ($L$, $X$, $E$, $I$, $\pi$, $\psi$) be a priced timed automaton. 
The semantics of $\mathcal{A}$ are given by a canonical priced timed transition system 
$\mathcal{T}_\mathcal{A} = (L_\mathcal{A}, X_\mathcal{A}, E_\mathcal{A})$ such that,
$L_{\mathcal{A}} = L$, $X_{\mathcal{A}} = X$, $E_{\mathcal{A}} =
E^{\tau}_{\mathcal{A}} \cup E^{e}_{\mathcal{A}}$, s.t.:
\begin{itemize}
\item $E^{\tau}_{\mathcal{A}} = \{(\mathcal{\ell}, \nu, u) \xrightarrow{t} (\mathcal{\ell}, \nu+t, u+\pi(\ell, t)) \; | \; t \in \mathbb{R}^+ \; \mbox{and} \; \forall \delta \in \mathbb{R}^+, 0 \leq \delta \leq t \Rightarrow (\nu+\delta)
  \vDash I(\mathcal{\ell})\}$
\item $E^{e}_{\mathcal{A}} = \{ (\mathcal{\ell}, \nu, u) \xrightarrow{Y} (\mathcal{\ell}', \nu[Y{:=}0], u+\psi(\gamma)) \; | \; 
  Y \subseteq X, \; \mbox{transition} \; \gamma = (\mathcal{\ell}, g, Y,
  \mathcal{\ell}') \in E,\; 
  \nu \vDash g, \; \nu \vDash I(\mathcal{\ell}) \; \mbox{and} \;  \nu[Y{:=}0]] \vDash I(\ell')\}$.
\end{itemize}
A run of the transition system $\mathcal{T}_\mathcal{A}$ starts with some configuration $(\ell, \nu, u_0)$ where $\ell \in L$,
  $\nu \in ({\mathbb{R}^+})^X$ and $u_0 \in \mathbb{R}$.
 We do not explicitly specify initial configuration in our definition of priced
 timed automaton. 
 
 \begin{definition}[Cost-optimal reachability problem] 
   Let $\mathcal{A}$ be a priced timed automaton. 
   Given two locations $\ell, \ell'$~of $A$, the optimal cost $\OptCost(\ell, \ell')$, of
   reaching $\ell'$ from $\ell$ is defined as  
   \[
   \OptCost(\ell, \ell') = \inf_{\rho \in \Runs(\ell, \ell')} C(\rho).
   \]
   Given priced timed automaton $\mathcal{A}$, locations $\ell$, $\ell'$, and a budget
   $B \in \Real$ the cost-optimal reachability problem is to decide whether 
   $\OptCost(\ell, \ell') \leq B$. 
 \end{definition}

\paragraph{\bf Summary of Our Results.}
 Our first result (Section~\ref{sec:undec}) is that the optimal cost
 reachability problem for general  priced timed automata is undecidable. 
 \begin{theorem}
   \label{thm1}
   Cost-optimal reachability problem for nonlinearly priced timed automata is
   undecidable.  
 \end{theorem}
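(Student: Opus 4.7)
The plan is to reduce from the halting problem for two-counter Minsky machines, which is classically undecidable. Given a deterministic two-counter machine $M$ with counters $c_1, c_2$ over a finite set of control states, I will construct a priced timed automaton $\mathcal{A}_M$ with distinguished locations $\ell_0$ (initial) and $\ell_h$ (halt) such that $\OptCost(\ell_0, \ell_h) = 0$ if $M$ halts on empty input, and $\OptCost(\ell_0, \ell_h) \geq 1$ otherwise. Answering the cost-optimal reachability question with budget $B = 0$ then decides halting, giving undecidability.

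First I would encode counter values into clock values, following standard tricks from timed-automata undecidability proofs. A convenient choice is $x_i = 2^{-c_i}$, so that $c_i = 0$ corresponds to $x_i = 1$ and incrementing $c_i$ halves $x_i$. Each instruction of $M$ is realized by a small gadget in $\mathcal{A}_M$: a few locations connected by delay and switch transitions whose guards, resets, and invariants are designed so that when the clocks enter the gadget encoding $(c_1, c_2)$ in control state $q$, they can leave encoding the appropriate successor configuration.

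The key step is to exploit the nonlinear prices to enforce the arithmetic exactly. If the waiting price in a location has the form $\pi((\ell,\nu),t) = (f(\nu,t))^2$ for a suitable polynomial $f$, then the duration cost contribution is zero iff $f(\nu,t) = 0$, and strictly positive otherwise. By attaching such quadratic penalties to each gadget, the total cost of a run from $\ell_0$ to $\ell_h$ is $0$ iff every delay satisfies its intended polynomial equation, which forces the gadget to perform its arithmetic exactly. Halving and doubling a clock value, and comparing a clock to $1$ for the zero test, are all expressible as polynomial identities in $\nu(x)$ and $t$, so the full simulation of $M$ can be encoded. All switching costs are set to $0$.

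The main obstacle I expect is showing that the infimum is attained at $0$ precisely when $M$ halts, rather than merely approached. One direction is immediate: a halting computation of $M$ yields a run of $\mathcal{A}_M$ with total cost $0$ by construction. For the converse, I must rule out runs that stray from the intended simulation yet still achieve arbitrarily small positive cost. This is handled by combining the quadratic penalties with tight guards and invariants that confine any feasible delay to a compact neighborhood of the intended value, so that a nonzero deviation in the underlying arithmetic gives a cost bounded below by a strictly positive constant depending only on the gadget and not on the run. Once this local tightness is established for every gadget, additivity of the cost implies that any run from $\ell_0$ to $\ell_h$ with cost below $1$ must exactly simulate $M$, closing the reduction.
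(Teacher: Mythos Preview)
Your proposal is correct and follows essentially the same approach as the paper: reduce from the halting problem for two-counter machines, encode counter $c$ in a clock as $1-2^{-c}$ (your choice $2^{-c}$ is equivalent), and attach quadratic delay prices such as $(1-x-t/2)^2$ and $(1-x-2t)^2$ to the decrement and increment gadgets so that only the intended delay yields zero cost. The paper uses four clocks $x,y,w,z$ and presents the construction at the same sketch level you do; in particular it does not work out the infimum-versus-minimum issue you correctly flag as the main obstacle.
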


 Given this negative result it is justifiable to look for various restricted
 subclasses of price functions in order to recover decidable variants. 
The first subclass that we consider is piece-wise linear price functions. 
A \emph{piecewise linear price function}  $f : \Rplus {\to} \Real$ can be
represented as tuple     
$(P^{\mathcal{\ell}}, Y_P^{\mathcal{\ell}}, Y_I^{\mathcal{\ell}})$ where
\begin{itemize}
\item $P^{\mathcal{\ell}} = \langle p_1=0, p_2, \ldots, p_n \rangle$ is an increasing sequence  of $n$-points in time.
  First point $p_1$ is always at time value zero. Thus, $p_1 (= 0) < p_2 < \ldots < p_n$ holds.
\item $Y_P^{\mathcal{\ell}} = \langle y_{p_1}, y_{p_2}, \ldots, y_{p_n} \rangle$ is a sequence of prices such that $y_{p_i} = f_\mathcal{\ell}(p_i)$.
\item $Y_I^{\mathcal{\ell}} = \langle (m_1, c_1), (m_2, c_2), \ldots, (m_n, c_n) \rangle$ is a sequence of $n$ tuples. 
  Time intervals formed by points in $P^{\mathcal{\ell}}$ are $I_1 \triangleq (p_1, p_2), \cdots, I_n \triangleq (p_n, +\infty)$.
  Again let $I = \langle I_1, \ldots, I_n \rangle$ be the sequence of intervals. Value of piecewise linear price function $f_\mathcal{\ell}$
  in the interval $I_k$ is given by parameters in tuple $(m_k, c_k)$, such that if $\tau \in I_k$, $f(\tau) = m_k \tau + c_k$.
\end{itemize}  
We call tuple $(P^{\mathcal{\ell}}, Y_P^{\mathcal{\ell}}, Y_I^{\mathcal{\ell}})$ as \emph{structure} of function $f$.
We call a timed automaton is \emph{piecewise linearly priced} if for all
configuration $(\ell, \nu)$ we have that 
$\pi((\ell, \nu), \tau) = f_\mathcal{\ell}(\tau)$,
where $f_\mathcal{\ell} = (P^{\mathcal{\ell}}, Y_P^{\mathcal{\ell}},
Y_I^{\mathcal{\ell}})$, is a piecewise linear function defined over interval
$[0, +\infty)$ 
  and all the constants appearing in its structure are integers.  
  Observe that the standard definition of linearly priced timed automata can be
  casted as a special case of piecewise linearly priced
  timed automata such that $f_\mathcal{\ell} =  (P^{\mathcal{\ell}},
  Y_P^{\mathcal{\ell}}, Y_I^{\mathcal{\ell}})$, where 
  $P^{\mathcal{\ell}} = \langle 0 \rangle$, $Y_P^{\mathcal{\ell}} = \langle 0
  \rangle$, and $Y_I^{\mathcal{\ell}} = \langle (k_\mathcal{\ell}, 0) \rangle$
  such that $k_\mathcal{\ell}$ is rate of change of price at location  
  $\mathcal{\ell}$.
  For LPTA the cost-optimal reachability problem is
  known to be PSPACE-complete~\cite{Bouyer:2007:ORP:1288667.1288679}. 
  In Section~\ref{sec:dec} we show the following key result of piecewise linearly priced
  timed automata.
  \begin{theorem}
    \label{thm:main2}
    The cost-optimal reachability problem for piecewise linearly priced timed
    automaton is PSPACE-complete. 
  \end{theorem}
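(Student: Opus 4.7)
The PSPACE lower bound is inherited from linearly priced timed automata: any LPTA is the PLPTA obtained by taking $P^\ell = \langle 0 \rangle$ and $Y_I^\ell = \langle (k_\ell, 0) \rangle$ at every location, so the PSPACE-hardness of LPTA reachability~\cite{Bouyer:2007:ORP:1288667.1288679} transfers directly. The core of the proof is therefore the PSPACE upper bound, which I plan to obtain by a polynomial, price-preserving reduction from PLPTA to LPTA followed by an appeal to the known LPTA algorithm.

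Given $\mathcal{A} = (L, X, E, I, \pi, \psi)$, I build the LPTA $\mathcal{A}'$ location by location. For each $\ell \in L$ whose price function has structure $(P^\ell, Y_P^\ell, Y_I^\ell)$ with $n_\ell$ linear pieces, introduce copies $\ell_1, \ldots, \ell_{n_\ell}$, one per piece, together with a single fresh clock $z$ which is reset on every entry to any split location. Copy $\ell_k$ carries the constant linear rate $m_k$ and inherits the invariant $I(\ell) \wedge p_k \le z \le p_{k+1}$, with $p_{n_\ell+1} = \infty$. A silent forced edge $\ell_k \to \ell_{k+1}$ guarded by $z = p_{k+1}$ and of zero discrete cost advances the token between pieces at each breakpoint. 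Every original incoming edge to $\ell$ is redirected to $\ell_1$, has $z$ added to its reset set, and is surcharged by the intercept $c_1^\ell = f_\ell(0)$ of the first piece; every original outgoing edge of $\ell$ is duplicated out of each copy $\ell_k$, preserving its guard, reset and cost.

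Correctness rests on the price-preserving bisimulation developed earlier in the paper, relating $(\ell, \nu, u)$ in $\mathcal{A}$ to $(\ell_k, \nu \cup \{z \mapsto \tau\}, u)$ in $\mathcal{A}'$ whenever $\tau \in [p_k, p_{k+1})$. The key algebraic fact is continuity of $f_\ell$ at every breakpoint, $m_k p_{k+1} + c_k = m_{k+1} p_{k+1} + c_{k+1}$, which is exactly what makes the bisimulation survive the silent inter-piece edges. A routine induction on the number of pieces traversed within one sojourn in $\ell$ then shows that $\mathcal{A}'$ accumulates exactly $c_1^\ell + m_1 p_2 + m_2 (p_3 - p_2) + \cdots + m_k (\tau - p_k) = f_\ell(\tau)$ after a total delay $\tau$, so $\OptCost_{\mathcal{A}}(\ell, \ell') = \OptCost_{\mathcal{A}'}(\ell_1, \ell'_1)$. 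The construction adds at most $\sum_\ell n_\ell$ locations and one clock, uses only constants already present in the input, and is hence logspace-computable; composing with the LPTA PSPACE procedure delivers the desired upper bound.

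The step I expect to require the most care is keeping $\mathcal{A}'$ genuinely inside the standard LPTA format rather than some extension with signed costs. The entry surcharge $c_1^\ell$ is a non-negative integer because $\pi$ takes values in $\mathbb{R}^+$ and so $f_\ell(0) = c_1^\ell \ge 0$; intercepts $c_k$ of later pieces are never used as edge costs because their contribution is absorbed through continuity across the silent breakpoint edges. A second delicate point is that the point guards $z = p_{k+1}$ do not obstruct the simulation of every PLPTA delay: by the assumed canonical alternation of delays and switches, a PLPTA delay of $\tau \ge p_{k+1}$ in $\ell$ is mirrored in $\mathcal{A}'$ by a delay of exactly $p_{k+1} - p_k$ in $\ell_k$ followed by the forced silent switch to $\ell_{k+1}$ and a further delay of $\tau - p_{k+1}$ there, a pattern which the bisimulation must be shown to accept. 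Once these bookkeeping points are nailed down, PSPACE-completeness follows immediately from the polynomial reduction and the LPTA algorithm of Bouyer et al.~\cite{Bouyer:2007:ORP:1288667.1288679}.
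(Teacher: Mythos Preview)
Your reduction from PwLPTA to LPTA follows a genuinely different design than the paper's. The paper uses a \emph{guess-and-verify} scheme: each location $\ell$ is split not only into interval copies $\ell^{(p_j,p_{j+1})}$ but also into \emph{point} copies $\ell^{p_i}$; on entering $\ell$ one nondeterministically jumps to the copy corresponding to the eventual total delay, pays the intercept $c_j$ (respectively the point value $y_{p_i}$) as part of the incoming edge cost, runs at constant rate $m_j$, and the guard $x\in I_j$ (respectively $x=p_i$) on the \emph{outgoing} edge verifies the guess. No silent internal edges are introduced, so a single delay in $\mathcal A$ is matched by a single delay in $\mathcal B$ and the paper's \emph{strong} price-preserving bisimulation (Lemma~\ref{lem:pwl}, then Lemma~\ref{lem:bisim2}) applies verbatim. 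Your sequential-traversal construction (enter $\ell_1$, advance through silent breakpoint edges as time passes) is arguably more intuitive but loses on two counts.

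First, and most seriously, you explicitly rely on continuity of $f_\ell$ at every breakpoint, but the paper's definition does \emph{not} impose it: the structure $(P^\ell,Y_P^\ell,Y_I^\ell)$ fixes $f_\ell(p_i)=y_{p_i}$ independently of the interval data $(m_k,c_k)$, so $f_\ell$ may jump at each $p_i$. Your telescoping identity $c_1+m_1p_2+\dots+m_k(\tau-p_k)=m_k\tau+c_k$ then fails, and your $\mathcal A'$ computes the wrong cost whenever the delay crosses (or lands exactly on) a discontinuity. This is precisely why the paper introduces the separate point copies $\ell^{p_i}$ with edge cost $\psi+y_{p_i}$, and why the intercept is charged on the edge \emph{into} the chosen copy rather than only on entry to $\ell_1$. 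Second, your relation is not a strong timed bisimulation in the sense the paper sets up: a delay in $\mathcal A$ that crosses a breakpoint must be mirrored in $\mathcal A'$ by a sequence delay--silent switch--delay, so the action labels in $\mathbb R^+\cup 2^X$ do not match one-for-one, and you cannot invoke Lemma~\ref{lem:bisim2} as written. You would need a weak (stuttering) variant of the price-preserving bisimulation, which the paper does not develop. Both issues disappear if you switch to the paper's guess-and-verify split.
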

  This result can easily be extended to piecewise-concave priced timed
  automata~\cite{DBLP:conf/formats/JurdzinskiT08}.
  
  We also study more general Lipschitz continuously priced timed automata. 
  We say that a function $f: \mathbb{R} \mapsto \mathbb{R}$ is Lipschitz
  continuous function, if there exists a constant $K \geq 0$, called Lipschitz
  constant of $f$, s.t. $\|f(x) - f(y) \| \leq K \|x - y \|$ for all $x, y$
  in the domain of $f$. 
  A timed automaton is then called \emph{Lipschitz continuous priced} if price
  functions $\pi((\ell, \nu), \tau) = f_\mathcal{\ell}(\tau)$, 
  are Lipschitz continuous for every location $\ell$ and there exists a constant
$T$ such that all the clock valuations are bounded from above by $T$.
 For this class of functions the optimal reachability problem may not be
computable due to optimal occurring at non-rational points. 
For this reason we study the following approximate optimal problem. 
\begin{definition}[$\varepsilon$-Cost-optimal reachability problem] 
  Let $\mathcal{A}$ be a priced timed automaton. Given $\varepsilon >0$ and
  two locations $\ell$, $\ell'$ of $\mathcal{A}$, a budget $B \in \Rplus$,  the
  $\varepsilon$-optimal cost problem 
  is to decide whether $\OptCost(\ell, \ell') \leq B + \varepsilon$.
 \end{definition}
We show in Section~\ref{sec:dec} the following result for Lipschitz-continuous priced timed automata.
\begin{theorem}
\label{thm3}
The $\varepsilon$-Cost-optimal reachability is decidable for Lipschitz-continuous
priced timed automata.
\end{theorem}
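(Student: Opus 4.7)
The plan is to reduce Theorem~\ref{thm3} to Theorem~\ref{thm:main2} by approximating each Lipschitz-continuous price function $f_\ell$ by a piecewise-linear price function $\tilde{f}_\ell$ on a fine enough grid, and then bounding the resulting perturbation in the optimal cost.

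First, I would produce, from $\mathcal{A}$ and the tolerance $\varepsilon$, an a~priori bound $N$ on the number of delay/switch transitions needed in a near-optimal run from $\ell$ to $\ell'$. Because clocks are uniformly bounded by $T$ and all constants are integer, the underlying region graph is finite of size polynomial in the standard parameters; prices are non-negative, so any run that revisits a region-equivalent configuration can in principle be shortened without increasing cost. The quantitative form of this argument, analogous to the corner-point argument for linearly priced timed automata, yields a polynomial bound $N$ on the length of a run realising cost within $\varepsilon/4$ of $\OptCost_{\mathcal{A}}(\ell, \ell')$.

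Second, let $K$ be the largest Lipschitz constant over the finitely many locations, and pick a grid step $\delta > 0$ with $N K \delta \leq \varepsilon/8$. For each location $\ell$, define $\tilde{f}_\ell$ as the piecewise-linear interpolation of $f_\ell$ at the points $0, \delta, 2\delta, \ldots, \lceil T/\delta\rceil \delta$, using rational values close enough to $f_\ell$ at each breakpoint. Lipschitz continuity then gives $|f_\ell(\tau) - \tilde{f}_\ell(\tau)| \leq 2K\delta$ for every $\tau \in [0,T]$. Scaling clock constants and price breakpoints by a common denominator yields a piecewise-linearly priced timed automaton $\tilde{\mathcal{A}}$ satisfying the integer-constant hypothesis of Theorem~\ref{thm:main2}.

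Third, for any canonical run of length at most $N$, the total cost discrepancy between $\mathcal{A}$ and $\tilde{\mathcal{A}}$ is at most $2NK\delta \leq \varepsilon/4$. Combined with the length bound of the first step, this yields $|\OptCost_{\mathcal{A}}(\ell,\ell') - \OptCost_{\tilde{\mathcal{A}}}(\ell,\ell')| \leq \varepsilon/2$. Thus running the PSPACE procedure of Theorem~\ref{thm:main2} on $\tilde{\mathcal{A}}$ with budget $B + \varepsilon/2$ answers the $\varepsilon$-optimal reachability question for $\mathcal{A}$.

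The main obstacle is justifying the length bound $N$ on near-optimal runs when prices are only Lipschitz and hence genuinely non-linear: the standard loop-elimination arguments rely on the affine structure of linearly priced runs, so one must either lift them through the piecewise-linear approximation (using Theorem~\ref{thm:main2}'s optimal-run bound on $\tilde{\mathcal{A}}$ and transferring back via the $2K\delta$ error estimate) or reprove them directly using Lipschitzness and non-negativity of prices. A secondary technical point is effectivity: the functions $f_\ell$ and their Lipschitz constants $K$ must be presented in a form (e.g.\ a polynomial, an analytic expression, or an oracle with a modulus of convergence) that permits computing rational approximations of $f_\ell$ at rational arguments with controllable error; otherwise the construction of $\tilde{\mathcal{A}}$ is not algorithmic.
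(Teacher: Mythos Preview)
Your overall strategy---approximate each $f_\ell$ by a piecewise-linear function on a fine grid and invoke Theorem~\ref{thm:main2}---matches the paper's. The decisive difference is that the paper uses a \emph{two-sided} (sandwich) approximation rather than your single interpolant $\tilde f_\ell$. Concretely, from the Lipschitz bound it builds piecewise-constant functions $f_l \le f \le f_u$ with $f_u - f_l \le K\delta$ on every grid cell, yielding PwLPTAs $\mathcal{A}_l$ and $\mathcal{A}_u$. Because the inequalities $f_l \le f \le f_u$ are pointwise, one gets
\[
\OptCost_{\mathcal{A}_l}(\ell,\ell') \;\le\; \OptCost_{\mathcal{A}}(\ell,\ell') \;\le\; \OptCost_{\mathcal{A}_u}(\ell,\ell')
\]
by comparing costs run-by-run, with \emph{no length restriction whatsoever} on runs of $\mathcal{A}$. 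A length bound is needed only to compare $\OptCost_{\mathcal{A}_l}$ with $\OptCost_{\mathcal{A}_u}$, and there it suffices to take a short near-optimal run of the PwLPTA $\mathcal{A}_l$ (bounded by the region-graph diameter $D$, available through Theorem~\ref{thm:main2}) and evaluate it in $\mathcal{A}_u$; this gives $\OptCost_{\mathcal{A}_u} - \OptCost_{\mathcal{A}_l} \le D K \delta$, and one chooses $\delta$ accordingly.

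This is precisely what dissolves your ``main obstacle.'' With a single interpolant you have only $|f_\ell - \tilde f_\ell| \le 2K\delta$, so for a run $\rho$ of length $m$ the cost discrepancy is $|C_{\mathcal{A}}(\rho) - C_{\tilde{\mathcal{A}}}(\rho)| \le 2mK\delta$. To obtain $\OptCost_{\tilde{\mathcal{A}}} \le \OptCost_{\mathcal{A}} + \varepsilon/2$ you must therefore control $m$ for near-optimal runs of $\mathcal{A}$ itself. Your suggested workaround of ``lifting'' the length bound from $\tilde{\mathcal{A}}$ back to $\mathcal{A}$ does not close this loop: a long cheap run in $\mathcal{A}$ may become expensive in $\tilde{\mathcal{A}}$, so knowing that $\tilde{\mathcal{A}}$ has a short optimal run says nothing about it. And your alternative of reproving loop-elimination directly from Lipschitzness and non-negativity runs into the usual difficulty that region-equivalent configurations are not identical, so excising a ``loop'' changes the clock valuations downstream---the standard argument that this does not increase cost genuinely uses affine (or concave) structure. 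Replacing interpolation by the sandwich removes the circularity entirely: the pointwise inequality does the work that your length bound was meant to do. Your secondary point about effectivity of the $f_\ell$ is handled in the paper just as you suggest, by passing to a sufficiently close rational approximation.
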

Finally, in Section~\ref{sec:tool} we give details of our implementation to
solve step-bounded cost-optimal reachability problem for general priced timed
automata.

\section{Undecidability}
\label{sec:undec}
This section is devoted to the proof of Theorem~\ref{thm1}.
We prove this result by reducing the halting problem for two 
counter machines to the cost-optimal reachability problem for priced timed automata.
 A \emph{two-counter machine} $M$ is a tuple $(L, C)$ where ${L = \set{\ell_0,
    \ell_1, \ldots, \ell_n}}$ is the set of instructions including a
distinguished terminal instruction $\ell_n$ called HALT, and the set 
${C = \set{c_1, c_2}}$ of two \emph{counters}.  
The instructions $L$ are of the type:
(1) (increment $c$) $\ell_i : c := c+1$;  goto  $\ell_k$,
(2) (decrement $c$) $\ell_i : c := c-1$;  goto  $\ell_k$,
(3) (zero-check $c$) $\ell_i$ : if $(c >0)$ then goto $\ell_k$
  else goto $\ell_m$,
where $c \in C$, $\ell_i, \ell_k, \ell_m \in L$.
A configuration of a two-counter machine is a tuple $(\ell, c, d)$ where
$\ell \in L$ is an instruction, and $c, d \in \Nat$ is the value of counters $c_1$
and $c_2$, resp. 
A run of a two-counter machine is a (finite or infinite) sequence of
configurations $\seq{k_0, k_1, \ldots}$ where $k_0 = (\ell_0, 0, 0)$ and the
relation between subsequent configurations is governed by transitions between
respective instructions. 
The \emph{halting problem} for a two-counter machine asks whether 
its unique run ends at the terminal instruction $\ell_n$.
The halting problem for two-counter machines is known~\cite{Min67} to be undecidable.

\begin{proof}[Proof of Theorem~\ref{thm1}]
  We reduce the reachability problem of two counter machines to an instance of the
  cost-optimal reachability problem $\OptCost(q, q')$ for priced timed automata
  $\mathcal{A}$ such that  
  desired configuration of two counter machine
  is reachable from its initial configuration 
  iff there is a run in the automaton $\mathcal{A}$ from $q$ to $q'$ of cost exactly zero.  
  
\begin{figure}[t]
 \centering
 
 \tikzstyle{location}=[rectangle, rounded corners, draw=blue!50,fill=blue!10,thick]
 \tikzstyle{transition}=[rectangle,draw=black!50,fill=black!20,thick]
 
 \begin{tikzpicture}[node distance=2.5cm,auto]  

  \node[draw,fill=black!10] (in_dec) {$\begin{array}{lll}
		    x&=&\encode(c)\\
		    y&=&\encode(d)\\
		    w&=&0\\
		    z&=&0
		  \end{array}$};
  
  \node[initial,initial text=,location] (l0) [right of=in_dec]  {$l_0$};
  \node[location] (l1) [right of=l0] {$l_1$};  
  \node[location] (l2) [right of=l1] {$l_2$};  

  \draw[->] (l0) to  node {$z > 0?$} node [swap] {$x,w:=0$} (l1);
  \draw[->] (l1) to node {$z=1?$} node [swap] {$\begin{array}{l}
                                           x:=0,\\
                                           z:=0
                                          \end{array}$} (l2);

  \draw[->] (l1) edge [loop above] node {$\begin{array}{l}
                                                                                    y=1?\, y:=0
                                          \end{array}$} ();

                                          \draw[->] (l0) edge [loop above] node {$\begin{array}{l}
                                                                             y=1?\, y:=0
                                          \end{array}$} ();

  \node [node distance=0.5cm,below of=l2] {$\{z=0\}$};
  
  \node[draw,fill=black!10] (out_dec) [right of=l2] {$\begin{array}{lll}
    x&=&0\\
    y&=&\encode(d)\\
    w&=&\encode(c-1)\\
    z&=&0
  \end{array}$};
  
  \node [node distance=1.3cm,below of=l0] {$\pi_{l_0} = (1-x-\frac{t}{2})^2$};
  \node [node distance=1.3cm,below of=l1] {$\pi_{l_1} = 0$};
  \node [node distance=1.3cm,below of=l2] {$\pi_{l_2} = 0$};
  
 \end{tikzpicture}
 \caption{Decrement $c$ module}
 \label{fig:undec-decr}
 \end{figure}
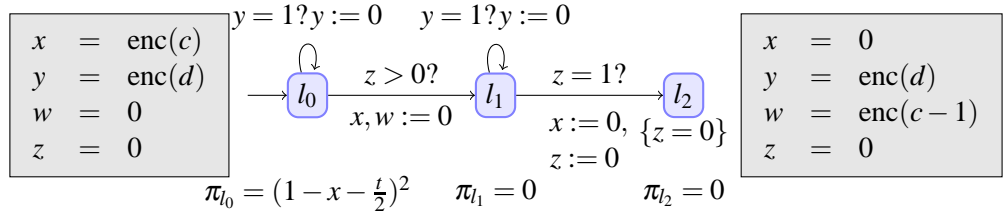 
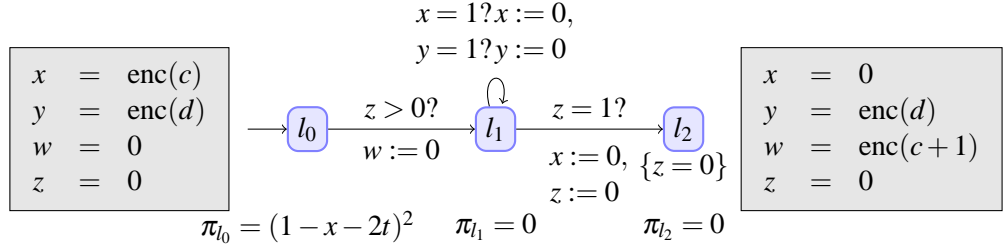
\begin{figure}
 \centering 
 \tikzstyle{location}=[rectangle, rounded corners, draw=blue!50,fill=blue!10,thick]
 \tikzstyle{transition}=[rectangle,draw=black!50,fill=black!20,thick]

 \begin{tikzpicture}[node distance=2.5cm,auto]  

  \node[draw,fill=black!10] (module_in) {$\begin{array}{lll}
		    x&=&\encode(c)\\
		    y&=&\encode(d)\\
		    w&=&0\\
		    z&=&0
		  \end{array}$};
  
  \node[initial,initial text=,location] (l0) [right of=module_in]  {$l_0$};
  \node[location] (l1) [right of=l0] {$l_1$};  
  \node[location] (l2) [right of=l1] {$l_2$};  

  \draw[->] (l0) to  node {$z > 0?$} node [swap] {$w:=0$} (l1);
  \draw[->] (l1) to node {$z=1?$} node [swap] {$\begin{array}{l}
                                           x:=0,\\
                                           z:=0
                                          \end{array}$} (l2);

  \draw[->] (l1) edge [loop above] node {$\begin{array}{l}
                                           x=1?\, x:=0,\\
                                           y=1?\, y:=0
                                          \end{array}$} ();
  
  \node [node distance=0.5cm,below of=l2] {$\{z=0\}$};
  
  \node[draw,fill=black!10] (module_out) [right of=l2] {$\begin{array}{lll}
    x&=&0\\
    y&=&\encode(d)\\
    w&=&\encode(c+1)\\
    z&=&0
  \end{array}$};
  
  \node [node distance=1.3cm,below of=l0] {$\pi_{l_0} = (1-x-2t)^2$};
  \node [node distance=1.3cm,below of=l1] {$\pi_{l_1} = 0$};
  \node [node distance=1.3cm,below of=l2] {$\pi_{l_2} = 0$};
  
 \end{tikzpicture}
 \caption{Increment $c$ module}
 \label{fig:undec-incr}
 \end{figure}

  Let $\mathcal{M}$ be the instance of the two counter machine having counters
  $c$ and $d$. 
  We construct a PTA $\mathcal{A}$ from $\mathcal{M}$ using suitable encoding.
  Valid runs of $\mathcal{M}$ are mapped to valid runs of $\mathcal{A}$ such that
  their cost is exactly zero. Figure~\ref{fig:undec-decr}
  and \ref{fig:undec-incr} 
  describes the module 
  simulating counter decrement  instruction of
  $\mathcal{M}$. 
  PTA  $\mathcal{A}$ is constructed by composing the various modules. 
  $\mathcal{A}$ uses four clocks -- $x$, $y$, $w$ and $z$, out of which $x$ and
 $y$ encode counters $c$ and $d$ as $x=1-\frac{1}{2^c}$ and
 $y=1-\frac{1}{2^d}$. Let $\encode(\cdot)$ denote this encoding function. 
 Testing whether $c$ is zero amounts to
 testing $x$ is zero in the guards of $\mathcal{A}$. Figure~\ref{fig:undec-decr}
 describes decrement operation on counter $c$. It shows clock valuations before
 entering the module and after 
 exiting the module when simulation is correct. Let $t$ be the amount of time
 spent in location $l_0$ during simulation. Let $(x,y,z,w)=(1-\frac{1}{2^c}, 1-\frac{1}{2^d}, 0, 0)$ be the configuration 
 on entering $l_0$.  
 We want to ensure that the time spent at $l_0$ is $t=\frac{1}{2^{c-1}}$. 
 The self loop at $l_0$ ensures that the value of $y$ never crosses 1. 
  If so, the new values of $x,y,z,w$ respectively are 
 $0, 1-(\frac{1}{2^d}-\frac{1}{2^{c-1}})$ or $\frac{1}{2^{c-1}}-\frac{1}{2^d}$, $\frac{1}{2^{c-1}}, 0$.
Note that the new value of $y$ after elapse of time $t$ is 
 $1-(\frac{1}{2^d}-\frac{1}{2^{c-1}})$ or $\frac{1}{2^{c-1}}-\frac{1}{2^d}$ depending on whether 
 $d > c$ or not. 
   A time of $1-\frac{1}{2^{c-1}}$ is spent 
 at location $l_1$. This gives us the configuration 
 $0, 1-\frac{1}{2^d}, 0, 1-\frac{1}{2^{c-1}}$ on reaching $l_2$. Note that 
 the self loop on $y$ at location $l_1$ 
 helps in regaining the value of $y$ to be $1-\frac{1}{2^d}$ 
 in the case when $d>c$.  
 Note that the cost is 0 iff $t=\frac{1}{2^{c-1}}$. 
  Thus, only correct simulation 
 incurs zero price. Likewise increment module in figure~\ref{fig:undec-incr}
 correctly works when $t=\frac{1-x}{2}$.  
 
 Observe that after every increment or decrement operation, the value of clock
 $x$ moves to clock $w$. 
 Hence, in order to composing $\mathcal{A}$ from individual modules we need to
 swap the roles of clocks $x$ and $w$ in every alternate modules. 
 Let $\langle c_1,d_1 \rangle$ be initial configuration and $\langle
 c_2,d_2 \rangle$ be target configuration of $\mathcal{M}$. 
 They map to clock valuation $\nu_1=(\encode(c_1), \encode(d_1), 0,0)$ 
 and $\nu_2=(\encode(c_2), \encode(d_2), 0,0)$ respectively. 
 To make $\nu_1$ and $\nu_2$ separate locations, we can scale all
 constants $\nu_1, \nu_2$ and $\mathcal{A}$ so as to make clock values in
 $\nu_1$ and $\nu_2$ integers. 
The construction is now complete. 
\qed
 \end{proof}

\section{Decidable Subclasses}
\label{sec:dec}
\paragraph{\bf Priced Timed Bisimulations.}
Let $\mathcal{A}$ and $\mathcal{B}$ be timed automata with their timed
transition systems $\mathcal{S}_\mathcal{A}$ and $\mathcal{S}_\mathcal{B}$. 
Let $\mathcal{Q}_\mathcal{A}$ and $\mathcal{Q}_\mathcal{B}$ respective sets of
configurations. 
A binary symmetric relation $\mathcal{R}$ over $\mathcal{Q}_\mathcal{A} \times
\mathcal{Q}_\mathcal{B}$ is a \emph{strong timed bisimulation relation} 
iff for all $a \in (\mathbb{R}^+ \cup 2^X)$
\begin{itemize}
 \item 
   if $q_1 \xrightarrow{a} q_1' $ and $q_1 \mathcal{R} q_2$ then there exists
   transition  $q_2 \xrightarrow{a} q_2'$ such that $q_1' \mathcal{R} q_2'$  
 \item 
   conversely, if $q_2 \xrightarrow{a} q_2'$ and $q_1 \mathcal{R} q_2$ then
   there exists transition 
   $q_1 \xrightarrow{a} q_1'$ such that $q_1' \mathcal{R} q_2'$,
\end{itemize}
where $q_1,q_1' \in \mathcal{Q}_\mathcal{A}$ and $q_2,q_2' \in \mathcal{Q}_\mathcal{B}$.
The relation $\mathcal{R}$ is \emph{strong timed bisimilarity} or \emph{strong
  timed bisimulation equivalence} if it is the largest strong  timed
bisimulation relation such that $\mathcal{R} \subseteq \mathcal{Q}_\mathcal{A}
\times \mathcal{Q}_\mathcal{B}$. 
Timed automata $\mathcal{A}$ and $\mathcal{B}$ are \emph{strong timed bisimilar}
if there exists such $\mathcal{R}$.  

  Let $\mathcal{A}$ and $\mathcal{B}$ be priced timed automata with their priced timed transition systems $\mathcal{T}_\mathcal{A}$ 
  and $\mathcal{T}_\mathcal{B}$. Let $\mathcal{P}_\mathcal{A}$ and $\mathcal{P}_\mathcal{B}$ be respective sets of priced configurations.
  A strong timed bisimilarity $\sim$ is said to \emph{price preserving} if for every $a \in (\mathbb{R}^+ \cup 2^X)$ 
  \begin{itemize}
  \item if $(q_1, u_1) \xrightarrow{a} (q_1', u_1') $ is in $\mathcal{T}_\mathcal{A}$ and $q_1 \sim q_2$ then there exists transition
    $(q_2, u_2) \xrightarrow{a} (q_2', u_2') $ in $\mathcal{T}_\mathcal{B}$ such that $q_1' \sim q_2'$ and $(u_1' - u_1) = (u_2' - u_2)$
  \item conversely, if $(q_2, u_2) \xrightarrow{a} (q_2', u_2') $ is in $\mathcal{T}_\mathcal{B}$ and $q_1 \sim q_2$ then there exists transition
    $(q_1, u_1) \xrightarrow{a} (q_1', u_1') $ in $\mathcal{T}_\mathcal{A}$ such that $q_1' \sim q_2'$ and $(u_1' - u_1) = (u_2' - u_2)$
  \end{itemize}
  where $(q_1, u_1), (q_1', u_1') \in \mathcal{P}_\mathcal{A}$ and 
  $(q_2, u_2), (q_2', u_2') \in \mathcal{P}_\mathcal{B}$.

\begin{lemma} \label{lem:bisim}
 If $\mathcal{A}$ and $\mathcal{B}$ are two priced timed automata with price preserving timed bisimilarity $\sim$, then
 for any $k$ length run $\varrho_\mathcal{A}^{(k)}$ in $\mathcal{A}$, where
 $  
  \varrho_\mathcal{A}^{(k)} = (q_\mathcal{A}^0, u_0)  \xrightarrow{a_1}  (q_\mathcal{A}^1, u_1)  \xrightarrow{a_2} 
  (q_\mathcal{A}^2, u_2)  \xrightarrow{a_3}  \cdots  \xrightarrow{a_{k-1}}  (q_\mathcal{A}^{k-1}, u_{k-1})  \xrightarrow{a_k} 
  (q_\mathcal{A}^k, u_k),
 $
 there is a run $k$ length run $\varrho_\mathcal{B}^{(k)}$ in $\mathcal{B}$, where
 $
  \varrho_\mathcal{B}^{(k)} = (q_\mathcal{B}^0, u_0)  \xrightarrow{a_1}  (q_\mathcal{B}^1, u_1)  \xrightarrow{a_2}  
  (q_\mathcal{B}^2, u_2)  \xrightarrow{a_3}  \cdots  \xrightarrow{a_{k-1}}  (q_\mathcal{B}^{k-1}, u_{k-1})  \xrightarrow{a_k} 
  (q_\mathcal{B}^k, u_k),
 $
 such that, for every $0 \leq i \leq k$, $q_\mathcal{A}^{i} \sim q_\mathcal{B}^{i}$ holds. $u_0$ is initial credit. 
\end{lemma}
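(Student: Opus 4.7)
The plan is to prove the lemma by induction on the length $k$ of the run in $\mathcal{A}$, using the price-preserving bisimilarity condition to produce the matching step in $\mathcal{B}$ at each stage.

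For the base case $k=0$, the run consists solely of the starting configuration $(q_\mathcal{A}^0, u_0)$. We pick $q_\mathcal{B}^0$ so that $q_\mathcal{A}^0 \sim q_\mathcal{B}^0$; such a configuration is available as an assumption on the initial pairing (the statement fixes a common initial credit $u_0$ on both sides). The zero-length run on the $\mathcal{B}$-side trivially satisfies the claim.

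For the inductive step, assume the claim holds for runs of length $k-1$. Given an $\mathcal{A}$-run of length $k$, truncate it to its first $k-1$ transitions to obtain, by the induction hypothesis, a matching $\mathcal{B}$-run
\[
(q_\mathcal{B}^0, u_0) \xrightarrow{a_1} \cdots \xrightarrow{a_{k-1}} (q_\mathcal{B}^{k-1}, u_{k-1})
\]
with $q_\mathcal{A}^{i} \sim q_\mathcal{B}^{i}$ and identical accumulated credits $u_i$ for $0 \le i \le k-1$. Now consider the $k$-th transition $(q_\mathcal{A}^{k-1}, u_{k-1}) \xrightarrow{a_k} (q_\mathcal{A}^k, u_k)$ in $\mathcal{T}_\mathcal{A}$. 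Since $q_\mathcal{A}^{k-1} \sim q_\mathcal{B}^{k-1}$ and $\sim$ is a price-preserving timed bisimilarity (and $a_k \in \mathbb{R}^+ \cup 2^X$ covers both delay and switch transitions), the forward clause of the definition yields a transition $(q_\mathcal{B}^{k-1}, u_{k-1}) \xrightarrow{a_k} (q_\mathcal{B}^k, u_k')$ in $\mathcal{T}_\mathcal{B}$ with $q_\mathcal{A}^k \sim q_\mathcal{B}^k$ and $u_k - u_{k-1} = u_k' - u_{k-1}$, so $u_k' = u_k$. Concatenating this transition to the previously constructed $\mathcal{B}$-run yields the desired $k$-length run $\varrho_\mathcal{B}^{(k)}$.

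There is no substantive obstacle: the key observation is simply that price-preservation is phrased in terms of price \emph{differences}, so requiring the two runs to share the common initial credit $u_0$ makes the absolute accumulated prices agree at every step by a telescoping argument. The only mildly delicate point is making sure that the same label $a_k$ (whether a delay $t \in \mathbb{R}^+$ or a reset set $Y \subseteq 2^X$) is used on both sides; this is immediate since price-preserving bisimilarity is a refinement of strong timed bisimilarity, which matches transitions label-for-label.
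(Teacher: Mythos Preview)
The paper does not actually supply a proof of this lemma: it is stated and then immediately followed by the remark leading into Lemma~\ref{lem:bisim2}, so the authors evidently regard it as routine. Your induction on the run length is exactly the standard argument one would expect here, and it is correct; the only minor point worth noting is that your base case tacitly assumes every $q_\mathcal{A}^0$ has some $\sim$-partner in $\mathcal{B}$, which is the intended reading in context but is not explicit in the definitions given.
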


As the choice of initial credit is arbitrary and the cost of a run does not depend on the value of initial credit, we claim following lemma.
\begin{lemma} \label{lem:bisim2}
 Let $\mathcal{A}$ and $\mathcal{B}$ be two priced timed automata with price preserving timed bisimilarity $\sim$. Then following
 statements are true.
 \begin{enumerate}
  \item \label{itm:stmt1} for every run $\rho_\mathcal{A}$ in $\mathcal{A}$, there exists a run $\rho_\mathcal{B}$ in $\mathcal{B}$ s. t. cost 
  $C(\rho_\mathcal{A}) = C(\rho_\mathcal{B})$
  \item for every run $\rho_\mathcal{B}$ in $\mathcal{B}$, there exists a run $\rho_\mathcal{A}$ in $\mathcal{A}$ s. t. cost 
  $C(\rho_\mathcal{A}) = C(\rho_\mathcal{B})$
 \end{enumerate}
\end{lemma}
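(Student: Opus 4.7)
The plan is to derive both statements directly from Lemma~\ref{lem:bisim} by observing that the cost of a run is exactly the net change in the accumulated-price coordinate, which is independent of the initial credit. By symmetry of the bisimulation $\sim$, it suffices to prove Statement~1; Statement~2 then follows by swapping the roles of $\mathcal{A}$ and $\mathcal{B}$.

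First I would fix an arbitrary run $\rho_\mathcal{A}$ in $\mathcal{A}$; since runs are finite, it has some length $k$, and in the canonical form it has the shape
\[
(q_\mathcal{A}^0, u_0) \xrightarrow{a_1} (q_\mathcal{A}^1, u_1) \xrightarrow{a_2} \cdots \xrightarrow{a_k} (q_\mathcal{A}^k, u_k),
\]
for some initial credit $u_0$ (one may take $u_0 = 0$, but this is irrelevant). Apply Lemma~\ref{lem:bisim} with the same starting credit $u_0$ to obtain a matching run $\rho_\mathcal{B}$ in $\mathcal{B}$ of the same length, with the same sequence of labels $a_1,\ldots,a_k$, and with identical accumulated-price values $u_0, u_1, \ldots, u_k$ at each position (the lemma produces runs in which the credit sequence literally coincides).

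Next, I would unfold the definition of $C(\rho)$. For a canonical run, each transition contributes either $\pi(q_{i-1}', \tau_i)$ (for a delay) or $\psi(e_i)$ (for a switch), and in both semantics these contributions are precisely the increments $u_i - u_{i-1}$ in the accumulated-price coordinate. Hence
\[
C(\rho_\mathcal{A}) = \sum_{i=1}^{k} (u_i - u_{i-1}) = u_k - u_0 = C(\rho_\mathcal{B}),
\]
because the price sequences in the two runs coincide pointwise. Statement~2 is then obtained by applying the same argument with $\mathcal{A}$ and $\mathcal{B}$ exchanged, using the symmetry of $\sim$ and the symmetric clauses in the definition of price-preserving bisimilarity.

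I do not foresee a real obstacle: the only subtle point is ensuring that the matching run produced by Lemma~\ref{lem:bisim} is genuinely a run of $\mathcal{B}$ (respecting guards, invariants, and the canonical alternation of delay/switch transitions), but this is built into the bisimulation clauses since the labels $a_i \in (\mathbb{R}^+ \cup 2^X)$ are reproduced verbatim. The independence of total cost from the choice of $u_0$, already noted in the paragraph preceding the lemma, is what lets us apply Lemma~\ref{lem:bisim} at a single common initial credit and conclude equality of costs unconditionally.
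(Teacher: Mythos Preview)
Your proposal is correct and follows exactly the line the paper intends: the paper does not give a formal proof of this lemma at all, but simply precedes it with the sentence ``As the choice of initial credit is arbitrary and the cost of a run does not depend on the value of initial credit, we claim following lemma,'' treating it as an immediate corollary of Lemma~\ref{lem:bisim}. Your argument is a faithful unpacking of that one-line justification---invoking Lemma~\ref{lem:bisim} to get coinciding price sequences, telescoping to $u_k - u_0$, and appealing to symmetry for the second statement---so there is nothing to add.
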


\subsection{Proof of Theorem~\ref{thm:main2}} 
\label{sec:pwl}                
 \begin{lemma}
  For every piecewise linearly priced timed automaton (PwLPTA), there exists linearly priced timed automaton with price preserving strong bisimulation between them.
 \end{lemma}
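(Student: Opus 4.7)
The plan is to construct, from any PwLPTA $\mathcal{A}=(L,X,E,I,\pi,\psi)$, a LPTA $\mathcal{B}$ together with an explicit price-preserving strong timed bisimulation between their semantics. The core idea is to expose each linear piece of every location as its own sub-location in $\mathcal{B}$, using a fresh clock to track how much time has elapsed since the current location was entered.

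Concretely, I would introduce a fresh clock $z \notin X$ and, for every location $\ell \in L$ whose piecewise linear price $f_\ell$ has breakpoints $0 = p_1 < p_2 < \cdots < p_{n_\ell}$ with pieces $(m_k, c_k)$, create sub-locations $\ell^1, \ldots, \ell^{n_\ell}$ in $\mathcal{B}$ and assign $\ell^k$ the constant duration-price rate $m_k$. I would strengthen the invariant on $\ell^k$ with $z \leq p_{k+1}$ (for $k < n_\ell$), so that no $\mathcal{B}$-delay can cross a breakpoint without first taking an intra-location switch $\ell^k \to \ell^{k+1}$. These intra-location switches are guarded by $z = p_{k+1}$, reset no clocks, and carry integer switching cost $s_k = (c_{k+1} - c_k) + (m_{k+1} - m_k)\,p_{k+1}$, derived so that the telescoping sum of linear contributions plus switching costs reproduces $f_\ell(\tau)$ on every piece. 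Each original edge $(\ell,g,Y,\ell') \in E$ is duplicated as an edge from each $\ell^k$ into ${\ell'}^1$ with guard $g$, resets $Y \cup \{z\}$, and switching cost $\psi(e) + c_1'$, where $c_1'$ is the constant offset of the first piece of $f_{\ell'}$; this absorbs the initial offset so that the cost accumulated at the moment ${\ell'}^1$ is entered is correct.

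The candidate bisimulation $\sim$ relates $(\ell,\nu) \in \mathcal{Q}_\mathcal{A}$ to $(\ell^k,\nu') \in \mathcal{Q}_\mathcal{B}$ iff $\nu'$ agrees with $\nu$ on $X$ and $\nu'(z) \in [p_k, p_{k+1})$ equals the time elapsed since $\ell$ was last entered. Verifying that $\sim$ satisfies the two clauses of price-preserving strong timed bisimilarity is then a case split on transition kinds: original switches transfer immediately by construction, and any delay that stays within a single piece $I_k$ matches directly, accumulating cost $m_k \cdot t$ on both sides by the very choice of rate in $\ell^k$. Invoking Lemma~\ref{lem:bisim2} yields $\OptCost(\ell, \ell') = \OptCost(\ell^1, {\ell'}^1)$, and combined with PSPACE-completeness of LPTA reachability \cite{Bouyer:2007:ORP:1288667.1288679} this immediately gives Theorem~\ref{thm:main2}.

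The main obstacle is handling $\mathcal{A}$-delays whose duration crosses one or more breakpoints of $f_\ell$: such a single delay transition has no single-step counterpart in $\mathcal{B}$, because the invariants forbid a $\mathcal{B}$-delay from straddling a breakpoint. I would overcome this by working at the level of canonical runs and applying the standard normalization that splits any cross-breakpoint delay of $\mathcal{A}$ into the finite sequence of maximal per-piece delays interleaved with the corresponding intra-location switches. This refinement is cost-preserving and preserves reachability, so restricting attention to refined canonical runs allows $\sim$ to serve as a genuine price-preserving strong timed bisimulation in the sense defined earlier in the paper.
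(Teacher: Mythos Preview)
Your construction differs substantially from the paper's. Where you chain sub-locations $\ell^1 \to \ell^2 \to \cdots$ via intra-location switches enforced by invariants $z \leq p_{k+1}$, the paper instead creates sub-locations $\ell^{p_i}$ and $\ell^{(p_i,p_{i+1})}$ for every breakpoint \emph{and} every open interval of $f_\ell$, with no intra-location edges at all. Each original edge $(\ell,\chi,\xi,\ell')$ is expanded into edges from every sub-location of $\ell$ to every sub-location of $\ell'$; the outgoing guard is strengthened to $\chi \wedge (x = p_i)$ or $\chi \wedge (x \in I_i)$ according to the source, and the edge price is $\psi(e') + y_{p_j}$ or $\psi(e') + c_j$ according to the \emph{target}. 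The effect is that on entering $\ell'$ one nondeterministically guesses which piece the forthcoming delay will land in, and that piece's constant offset $c_j$ is paid up front on the incoming edge rather than telescoped across later intra-location switches.

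This is precisely what lets the paper sidestep the obstacle you flag. In the paper's LPTA, one delay-then-switch step of $\mathcal{A}$ always corresponds to one delay-then-switch step of $\mathcal{B}$ (from the correctly guessed sub-location), so the bisimulation proof proceeds on combined $(t,\lambda)$ transitions without any splitting or run normalization. Your construction, by contrast, forces extra $\mathcal{B}$-steps whenever an $\mathcal{A}$-delay crosses a breakpoint, and conversely the intra-location switches of $\mathcal{B}$ have no $\mathcal{A}$-counterpart at all; your relation is therefore not a \emph{strong} timed bisimulation in the sense the lemma demands, only a run-level cost correspondence. That correspondence suffices for Theorem~\ref{thm:main2}, but you cannot then invoke Lemma~\ref{lem:bisim2} as written, since its hypothesis is exactly the strong bisimulation you do not have. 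The paper's nondeterministic-entry device is the missing idea that makes the lemma go through as stated.
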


 \begin{proof}
  We prove this lemma by constructing LPTA explicitly from a given PwLPTA. Rest of this section explains construction and 
  lemma~\ref{lem:pwl} proves its correctness. 
  
 \end{proof}

 \paragraph*{Construction of LPTA} Let $\mathcal{A}$ = ($L_\mathcal{A}$, $X_\mathcal{A}$, $E_\mathcal{A}$, $I_\mathcal{A}$, $\pi_\mathcal{A}$, 
 $\psi_\mathcal{A}$) be a PwLPTA. 
 We construct LPTA $\mathcal{B}$ = ($L_\mathcal{B}$, $X_\mathcal{B}$, $E_\mathcal{B}$, $I_\mathcal{B}$, $\pi_\mathcal{B}$, 
 $\psi_\mathcal{B}$) from PwLPTA $\mathcal{A}$ as follows:
 \begin{itemize}  
  \item Let $\mathcal{\ell} \in L_\mathcal{A}$ be some location of $\mathcal{A}$. Delay price function for location $\mathcal{\ell}$,
  $\pi_\mathcal{A}(\mathcal{\ell}, \tau) = f_\mathcal{\ell}(\tau)$, is piecewise linear with respect to $\tau$. $f_\mathcal{\ell}$ is
  given by integer restricted structure $(P^\mathcal{\ell}, Y^\mathcal{\ell}_P, Y^\mathcal{\ell}_I)$, where
  \begin{itemize}
   \item $P^{\mathcal{\ell}} = \langle p_1=0, p_2, \ldots, p_n \rangle$   
   \item $Y_P^{\mathcal{\ell}} = \langle y_{p_1}, y_{p_2}, \ldots, y_{p_n} \rangle$ 
   \item $Y_I^{\mathcal{\ell}} = \langle (m_1, c_1), (m_2, c_2), \ldots, (m_n,
     c_n) \rangle$ with the following interval sequence 
     \[
     I = \langle I_1 \triangleq (p_1, p_2), I_2 \triangleq (p_2, p_3), \cdots,
     I_n \triangleq (p_n, p_{n+1}=+\infty ) \rangle.
     \]
  \end{itemize}
  We associate each $p_i \in P^\mathcal{\ell}$ and each $I_j \in I$ with locations of $L_\mathcal{B}$. This association is captured by mapping 
  $\alpha^\mathcal{\ell}$ such that $\alpha^\mathcal{\ell}(p_i) = \mathcal{\ell}^{p_i}$ and $\alpha^\mathcal{\ell}(I_j) = 
  \mathcal{\ell}^{(p_j, \, p_{j+1})}$. Here, $\mathcal{\ell}^{p_i}$ and $\mathcal{\ell}^{(p_j, \, p_{j+1})}$
  are the names of locations of $\mathcal{B}$.
  We define another mapping $\beta^\mathcal{\ell}(I_j)$ which returns $j$\textsuperscript{th} entry in the sequence $Y_I^\mathcal{\ell}$. This mapping is
  useful for retrieving parameters of delay cost function in the interval $I_j$.
  Let $\theta^\mathcal{\ell} = \cup_{i=1}^{n}\{\mathcal{\ell}^{p_i}, \mathcal{\ell}^{(p_i, \, p_{i+1})}\}$.
  $\theta^\mathcal{\ell}$ denotes locations in $L_\mathcal{B}$ generated from location $\mathcal{\ell} \in L_\mathcal{A}$.
  Then $L_\mathcal{B} := \cup_{\mathcal{\ell} \in L_\mathcal{A}} \theta^\mathcal{\ell}$. 
  \item We add one extra clock named $x$ to $\mathcal{B}$. Thus, $X_\mathcal{B} := X_\mathcal{A} \cup \{x\}$. This clock measures
  time spent at every location of $\mathcal{A}$. Whenever a run enters any location of $\mathcal{A}$, $x$ is reset to zero.
  
  \item An edge $e = (l, \varphi, \lambda, l') \in E_\mathcal{B}$ iff 
  there is an edge $e' = (\mathcal{\ell}, \chi, \xi, \mathcal{\ell}') \in E_\mathcal{A}$ such that 
  \begin{itemize}      
   \item either $\alpha^\mathcal{\ell}(p_i) = l$ or $\alpha^\mathcal{\ell}(I_i) = l$
   \item either $\alpha^\mathcal{\ell}(p_j) = l'$ or $\alpha^\mathcal{\ell}(I_j) = l'$   
   \item $\varphi := \left\{ 
     \begin{array}{ll}
      \chi \wedge (x = p_i) & \mbox{if } \alpha^\mathcal{\ell}(p_i) = l\\
      \chi \wedge (x \in I_i) & \mbox{if } \alpha^\mathcal{\ell}(I_i) = l      
     \end{array} \right.$
   \item $\lambda := \xi \cup \{x\}$
  \end{itemize}
  \item Location invariant, $I_\mathcal{B}(l) = I_\mathcal{A}(\ell)$ iff $l \in \theta^\ell$
  \item Location price, $\pi_\mathcal{B}(l) := \left\{ 
     \begin{array}{ll}
       0 & \mbox{if } \alpha^\mathcal{\ell}(p_i) = l\\
      m_i & \mbox{if } \alpha^\mathcal{\ell}(I_i) = l \mbox{ and } \beta^\mathcal{\ell}(I_i) = (m_i, c_i)
     \end{array} \right.$
  \item Edge price, $\psi_\mathcal{B}(e) := \left\{ 
     \begin{array}{ll}
      \psi_\mathcal{A}(e') + y_{p_i} & \mbox{if } \alpha^\mathcal{\ell'}(p_i) = l'\\
      \psi_\mathcal{A}(e') + c_i & \mbox{if } \alpha^\mathcal{\ell'}(I_i) = l' \mbox{ and } \beta^\mathcal{\ell'}(I_i) = (m_i, c_i)
     \end{array} \right.$  
  
 \end{itemize}

 Let $\ell$ and $m$ be locations of $\mathcal{A}$ and $\mathcal{B}$ respectively. We define following relation between $\ell$ and $m$,
 $\Upsilon = \{(\ell, m) \, \vert \, m \in \theta^\ell \}$.
 \begin{lemma}\label{lem:pwl}
  $\Upsilon$ is price preserving timed bisimilarity.  
 \end{lemma}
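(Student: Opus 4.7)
The plan is to lift the location-level relation $\Upsilon$ to a relation $\sim$ on priced configurations and then verify both clauses of the price-preserving bisimulation game. Declare $(\ell, \nu_\mathcal{A}, u_\mathcal{A}) \sim (m, \nu_\mathcal{B}, u_\mathcal{B})$ iff $(\ell, m) \in \Upsilon$, the valuations agree on the common clocks $X_\mathcal{A}$, the auxiliary clock $x$ of $\mathcal{B}$ is consistent with the chosen sublocation (i.e., $\nu_\mathcal{B}(x) = p_i$ when $m = \ell^{p_i}$ and $\nu_\mathcal{B}(x) \in I_j$ when $m = \ell^{I_j}$), and $u_\mathcal{A} = u_\mathcal{B}$. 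Invariants transfer across $\sim$ immediately because $I_\mathcal{B}(m) = I_\mathcal{A}(\ell)$ for every $m \in \theta^\ell$ and the inherited invariants do not mention $x$.

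For switch transitions the construction supplies, for each $\mathcal{A}$-edge $e' = (\ell, \chi, \xi, \ell')$ and each $\sim$-consistent sublocation $m \in \theta^\ell$, a $\mathcal{B}$-edge whose guard $\chi \wedge (x \in I_i)$ or $\chi \wedge (x = p_i)$ is enabled exactly when $m$ matches $\nu_\mathcal{B}(x)$, whose reset is $\xi \cup \{x\}$, and whose cost is $\psi_\mathcal{A}(e')$ augmented by the entry cost $y_{p_1}$ or $c_1$ of the target sublocation. I would verify that the post-states are again $\sim$-related, that the guard equivalence is immediate from the definition of $\sim$, and that the extra $y_{p_1}$ term on the $\mathcal{B}$-side precisely compensates for the fact that in the linearly priced $\mathcal{B}$ the constant part of $f_{\ell'}$ must be paid up front while in $\mathcal{A}$ it is embedded in the subsequent delay cost.

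For delay transitions the key identity is that on $I_j = (p_j, p_{j+1})$ the piecewise linear price satisfies $f_\ell(\tau) = m_j \tau + c_j$, and the LPTA $\mathcal{B}$ assigns rate $m_j$ exactly to $\ell^{I_j}$. Matching a single $\mathcal{A}$-delay of duration $t$ starting from clock value $\tau_0 = \nu_\mathcal{B}(x)$ uses the (implicit) zero-delay edges between adjacent sublocations fired as $x$ crosses each breakpoint. A telescoping calculation of the rate contributions $\sum_j m_j \Delta_j$ over the visited intervals, combined with the constants collected at breakpoint transitions, shows that the total $\mathcal{B}$-cost along this sequence equals $f_\ell(\tau_0 + t) - f_\ell(\tau_0)$; specialized to $\tau_0 = 0$ after an entering reset of $x$, this is $f_\ell(t) - y_{p_1}$, which precisely complements the $y_{p_1}$ already paid at entry and so matches the $\mathcal{A}$-side cost $f_\ell(t)$.

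The main obstacle is that strong bisimilarity demands one-to-one transition matching, whereas an $\mathcal{A}$-delay crossing one or more breakpoints of $f_\ell$ must be matched on the $\mathcal{B}$-side by a sequence alternating intra-sublocation delays with zero-delay breakpoint edges. The construction in the excerpt does not make those internal breakpoint edges explicit, so the proof must either install them (with prices chosen so that the telescoping identity above yields exactly $f_\ell(t) - y_{p_1}$ for the delay portion) or adopt the convention that such stuttering through breakpoints is absorbed into the matching. Once this point is settled, Lemmas~\ref{lem:bisim} and~\ref{lem:bisim2} immediately lift $\sim$ to a cost-preserving correspondence on runs, which is exactly what Theorem~\ref{thm:main2} requires.
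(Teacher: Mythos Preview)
Your proof takes a genuinely different route from the paper's. The paper does \emph{not} match pure delays and pure switches separately; instead it treats a combined delay--then--switch step $(l_1,\nu_1,u_1)\xrightarrow{(t,\lambda)}(l'_1,\nu'_1,u'_1)$ as the unit to be simulated, and---crucially---it selects the $\mathcal{B}$-side source sublocation $l_2\in\theta^{l_1}$ \emph{after} seeing the delay $t$: if $t=p_i$ then $l_2=\alpha^{l_1}(p_i)$, otherwise $l_2=\alpha^{l_1}(I_j)$ for the unique $I_j\ni t$. Since the incoming $\mathcal{B}$-edge reset $x$ to $0$, after delaying $t$ the guard $x=p_i$ or $x\in I_j$ on the outgoing edge is satisfied and the price of the combined step is $m_j t + c_j + \psi_\mathcal{A}(e') = f_{l_1}(t)+\psi_\mathcal{A}(e')$ (resp.\ $y_{p_i}+\psi_\mathcal{A}(e')$), matching $\mathcal{A}$ exactly. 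With this device no breakpoint is ever ``crossed'' and no internal sublocation edges are needed; your main obstacle simply does not arise.

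Your formulation, by contrast, pins the sublocation via the $x$-consistency constraint \emph{before} the delay is revealed, which forces you into the breakpoint-crossing problem you correctly diagnose. Your approach is closer to the letter of the per-transition definition of price-preserving bisimilarity given earlier in the paper, whereas the paper's argument is really a direct run-level cost correspondence (what Lemma~\ref{lem:bisim2} states) dressed up in bisimulation language; in particular, the paper's choice of $l_2$ after seeing $t$ and its bundling of delay and switch are not literally licensed by the per-action clauses of the stated definition. So: your route is more rigorous but requires the extra machinery (internal breakpoint edges with carefully chosen prices, or a weak/stuttering bisimulation) that you outline, while the paper's route is shorter because it exploits canonicity and the freedom in $\Upsilon$ to pick the sublocation post hoc, at the cost of being loose relative to its own formal definition. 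Either way the conclusion needed for Theorem~\ref{thm:main2}---equality of optimal costs via Lemma~\ref{lem:bisim2}---goes through.
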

 \begin{proof}
   Let $\mathcal{A}$ = ($L_\mathcal{A}$, $X_\mathcal{A}$, $E_\mathcal{A}$, $I_\mathcal{A}$, $\pi_\mathcal{A}$, 
 $\psi_\mathcal{A}$) be a PwLPTA and $\mathcal{B}$ = ($L_\mathcal{B}$, $X_\mathcal{B}$, $E_\mathcal{B}$, $I_\mathcal{B}$, $\pi_\mathcal{B}$, 
 $\psi_\mathcal{B}$) be LPTA constructed from $\mathcal{A}$ using above construction.
 
 If part: Let $t$ be delay and $\lambda$ be set of clocks to be reset in $\mathcal{A}$. Now consider following transition in 
 $\mathcal{T}_\mathcal{A}$, $(l_1, \nu_1, u_1) \xrightarrow{(t, \lambda)} (l'_1, \nu'_1, u'_1)$. Now we try to find simulating
 transition in $\mathcal{B}$ under relation $\Upsilon$. We claim its $(l_2, \nu_1:0, u_1) \xrightarrow{(t, \lambda)} (l'_2, \nu'_1:0, u'_1)$.
 To hold this claim, we choose $l_2 \in L_\mathcal{B}$ such that delay $t$ matches with expected interval of $l_2$. If $t=p_i$ for some 
 $i$ then $l_2 = \alpha^{l_1}(p_i)$. Otherwise $t$ will match with some interval $I_j$. So $l_2 = \alpha^{l_1}(I_j)$. Thus, $(l_1,
 l_2) \in \Upsilon$ holds. To place edge in $\mathcal{B}$, construction mandates $(l'_1, l'_2) \in \Upsilon$. Also the clocks in $X_\mathcal{A}$ change identically.
 Now, let's verify that prices are preserved. For the case where $t=p_i$, $(u'_1 - u_1) = y_{p_i} + \psi_\mathcal{A}((l_1, l_2))$.
 Verify that from construction yields same price difference. For the case where $t=I_j$, location price matters. Verify that
 rates at $l'_1$ and $l'_2$ are the same in the construction. Price change $(u'_1 - u_1) = m_j \cdot t + c_j + \psi_\mathcal{A}((l_1, l_2))$.
 Price offset $c_j$ is added to edge cost in the construction. Thus prices are preserved.
 
 Else if part: We consider following transition in 
 $\mathcal{T}_\mathcal{B}$, $(l_2, \nu_2:0, u_2) \xrightarrow{(t, \lambda)} (l'_2, \nu'_2:0, u'_2)$. We simulate it on $\mathcal{A}$
 to get $(l_1, \nu_2, u_1) \xrightarrow{(t, \lambda)} (l'_1, \nu'_2, u'_1)$. If $(l_1, l_2) \in \Upsilon$, then construction offers no choice
 but to choose $l'_1$ such that $(l'_1, l'_2) \in \Upsilon$ holds. $\nu'_2 := (\nu_2+t)[\lambda := 0]$ follows from construction. Verify that
 prices are preserved using the same argument as in if part of the proof.
 \end{proof}

Now we are in position to sketch the proof of Theorem~\ref{thm:main2}. 

 \begin{proof}[Proof of Theorem~\ref{thm:main2}]
  PSPACE-hardness follows from the fact that LPTA are nothing but PwLPTA with single piece and their cost-optimal reachability is 
  PSPACE-complete. We now explain a PSPACE algorithm for solving cost-optimal reachability for PwLPTA.
  We construct LPTA $\mathcal{B}$ for given piecewise linearly priced timed automaton $\mathcal{A}$ and 
  solve cost-optimal reachability on $\mathcal{B}$. Construction yields priced timed bisimilarity $\Upsilon$. 
  Using lemma~\ref{lem:bisim2},
  we get  $\mbox{OptCost}(l, l')$ $= \mbox{opt }\{ \mbox{OptCost}(m, m')\ \vert $ $\ (l,m) \in \Upsilon \mbox{ and } (l',m') \in \Upsilon \}
   $
  where $l$ and $l'$ are locations of $\mathcal{A}$, $m$ and $m'$ are locations of $\mathcal{B}$ and
  opt is either supremum or infimum. \qed
 \end{proof}

\subsection{Proof of Theorem~\ref{thm3}}
Before we sketch a proof of Theorem~\ref{thm3}, we introduce the concept of iterative approximation for nonlinear price functions.

 Let $\mathcal{A}$ = ($L$, $X$, $E$, $I$, $\pi$, $\psi$) be a priced timed
 automaton. If for some location $\mathcal{\ell}$, price function
 $\pi(\mathcal{\ell}, \tau)$ is nonlinear with respect to $\tau$, then 
 $\mathcal{A}$ is nonlinearly priced timed automaton (NLPTA).
 
 \begin{definition}
  We define a PwLPTA $\mathcal{A}_u$ = ($L$, $X$, $E$, $I$, $\pi_u$, $\psi$) be upper bound price approximation of $\mathcal{A}$, if for every location $\mathcal{
  \ell}$ and time $\tau$, $\pi_u(\mathcal{\ell}, \tau) \geq \pi(\mathcal{\ell}, \tau)$ and $\pi_u(\mathcal{\ell}, \tau)$ is piecewise linear
  in $\tau$ for a fixed $\mathcal{\ell}$.
  
  Similarly, a PwLPTA $\mathcal{A}_l$ = ($L$, $X$, $E$, $I$, $\pi_l$, $\psi$) is lower bound price approximation of $\mathcal{A}$, if for every location $\mathcal{
  \ell}$ and time $\tau$, $\pi_l(\mathcal{\ell}, \tau) \leq \pi(\mathcal{\ell}, \tau)$ and $\pi_l(\mathcal{\ell}, \tau)$ is piecewise linear
  in $\tau$ for a fixed $\mathcal{\ell}$.
 \end{definition}
 
 \begin{lemma}
  OptCost$_{\mathcal{A}_l}$($\ell$, $\ell'$) $\leq$ OptCost$_\mathcal{A}$($\ell$, $\ell'$) $\leq$ OptCost$_{\mathcal{A}_u}$($\ell$, $\ell'$)
 \end{lemma}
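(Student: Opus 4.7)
The plan is to exploit the fact that $\mathcal{A}_l$, $\mathcal{A}$, and $\mathcal{A}_u$ share the same underlying timed automaton skeleton $(L, X, E, I)$ and the same switching-cost function $\psi$, differing only in the delay-price functions $\pi_l, \pi, \pi_u$. Because guards, invariants and resets are identical in all three, a sequence of delays and switches is a valid run in one iff it is a valid run in the others. Consequently the three priced timed transition systems $\mathcal{T}_{\mathcal{A}_l}, \mathcal{T}_{\mathcal{A}}, \mathcal{T}_{\mathcal{A}_u}$ share the same set $\Runs(\ell, \ell')$ of underlying (time-stamped, reset-labelled) runs from $\ell$ to $\ell'$, and differ only in the numerical cost they assign to each such run.

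Given any canonical run
\[
\rho = q_0' \xrightarrow{\tau_1} q_1 \xrightarrow{e_1} q_1' \xrightarrow{\tau_2} \cdots \xrightarrow{\tau_m} q_m \xrightarrow{e_m} q_m',
\]
the switching cost $C_s(\rho) = \sum_{k=1}^m \psi(e_k)$ is identical in $\mathcal{A}_l$, $\mathcal{A}$ and $\mathcal{A}_u$. For the duration cost, pointwise domination $\pi_l(\ell, \tau) \leq \pi(\ell, \tau) \leq \pi_u(\ell, \tau)$ gives, term by term,
\[
\pi_l(q_{k-1}', \tau_k) \;\leq\; \pi(q_{k-1}', \tau_k) \;\leq\; \pi_u(q_{k-1}', \tau_k),
\]
so summing over $k$ yields $C_d^{\mathcal{A}_l}(\rho) \leq C_d^{\mathcal{A}}(\rho) \leq C_d^{\mathcal{A}_u}(\rho)$. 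Adding the common switching cost preserves the inequality, so $C^{\mathcal{A}_l}(\rho) \leq C^{\mathcal{A}}(\rho) \leq C^{\mathcal{A}_u}(\rho)$ for every $\rho \in \Runs(\ell, \ell')$.

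Finally, taking infima over the common set $\Runs(\ell, \ell')$ preserves both inequalities (monotonicity of $\inf$), giving
\[
\OptCost_{\mathcal{A}_l}(\ell, \ell') \;\leq\; \OptCost_{\mathcal{A}}(\ell, \ell') \;\leq\; \OptCost_{\mathcal{A}_u}(\ell, \ell'),
\]
as required. There is no real obstacle in this proof; the only point demanding care is the observation that the three automata induce \emph{identical} sets of runs, so that the same $\rho$ may legitimately be evaluated under each of the three cost measures and a per-run inequality can be lifted to an inequality between their infima.
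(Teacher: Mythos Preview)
Your argument is correct. The paper actually states this lemma without proof, treating it as immediate from the definition of $\mathcal{A}_l$ and $\mathcal{A}_u$ (which share the tuple $(L,X,E,I,\psi)$ with $\mathcal{A}$ and differ only in the delay-price function). Your write-up makes explicit precisely the reasoning the authors leave implicit: identical skeletons give identical run sets, pointwise domination of $\pi_l \le \pi \le \pi_u$ gives per-run cost ordering, and monotonicity of $\inf$ transfers this to $\OptCost$. There is nothing to add or correct.
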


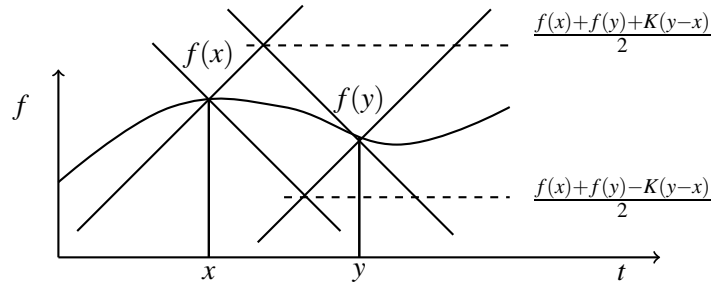
\begin{figure}[b]
\begin{center}
\begin{tikzpicture}[thick,outer sep=0,inner sep=0,minimum size=0]
\node (v1) at (-2.5,4.5) {};
\node (v2) at (-2.5,7) {};
\node (v3) at (5.5,4.5) {};
\draw[->]  (v1) edge (v2);
\draw[->]  (v1) edge (v3);
\draw  plot[smooth, tension=.7] coordinates {(-2.5,5.5) (-1,6.5) (0.5,6.5) (2,6) (3.5,6.5)};
\node (v8) at (-0.5,4.5) {};
\node (v10) at (1.5,4.5) {};
\node (v9) at (-0.5,6.6) {};
\node (v11) at (1.5,6.1) {};

\begin{scope}[xshift=1.25cm, yshift=-4.15cm]
\node (v4) at (-2.5,11.5) {};
\node (v7) at (-3.5,9) {};
\node (v6) at (-0.50,12) {};
\node (v5) at (0,9) {};
\draw  (v4) edge (v5);
\draw  (v6) edge (v7);
\draw  (v8) edge (v9);
\draw  (v10) edge (v11);
\end{scope}

\begin{scope}[xshift=3.25cm, yshift=-4.7cm]
\node (v4) at (-3.5,12.5) {};
\node (v7) at (-3.1,9.4) {};
\node (v6) at (0,12.5) {};
\node (v5) at (-0.5,9.5) {};
\draw  (v4) edge (v5);
\draw  (v6) edge (v7);
\draw  (v8) edge (v9);
\draw  (v10) edge (v11);
\end{scope}

\node [outer sep=0,inner sep=0,minimum size=0] at (-0.5,4.3) {$x$};
\node [outer sep=0,inner sep=0,minimum size=0] at (1.5,4.3) {$y$};
\node [outer sep=0,inner sep=0,minimum size=0] at (-0.5,7.2) {$f(x)$};
\node [outer sep=0,inner sep=0,minimum size=0] at (1.5,6.6) {$f(y)$};

\node [outer sep=0,inner sep=0,minimum size=0] (v12) at (0,7.32) {};
\node [outer sep=0,inner sep=0,minimum size=0] (v13) at (3.5,7.32) {};
\draw[style=dashed]  (v12) edge (v13);

\node [outer sep=0,inner sep=0,minimum size=0] (v14) at (0.5,5.3) {};
\node [outer sep=0,inner sep=0,minimum size=0] (v15) at (3.5,5.3) {};
\draw[style=dashed]  (v14) edge (v15);

\node [outer sep=0,inner sep=0,minimum size=0] (v16) at (5,7.5) {$\frac{f(x)+f(y)+K(y-x)}{2}$};
\node [outer sep=0,inner sep=0,minimum size=0] (v17) at (5,5.3) {$\frac{f(x)+f(y)-K(y-x)}{2}$};

\node [outer sep=0,inner sep=0,minimum size=0] at (-3.0,6.5) {$f$};
\node [outer sep=0,inner sep=0,minimum size=0] at (5,4.3) {$t$};
\end{tikzpicture}
\end{center}
\vspace{-0.5cm}
\caption{Upper and lower bounds for Lipschitz continuous function in the range $[x,y]$}
\label{fig:lip}
\end{figure}

Now we are in position to sketch the proof of Theorem~\ref{thm3}.
\begin{proof}[Proof of Theorem~\ref{thm3}]
 Let $f:\Real \mapsto \Real$ be Lipschitz continuous function with Lipschitz constant $K$. Let $x, y \in \Real$ be any two arbitrary points in 
 the interval $[x, y]$. The value of $f$ in $[x, y]$ is upper bounded by $\frac{f(x)+f(y)+K(y-x)}{2}$ and lower bounded by $\frac{f(x)+f(y)-K(y-x)}{2}$.
 Figure~\ref{fig:lip} shows calculation of these bounds for a Lipschitz continuous function. More precisely, for every $t \in [x, y]$,
 \[
 f(t) \in \Big[ \frac{f(x)+f(y)-K(y-x)}{2}, \, \frac{f(x)+f(y)+K(y-x)}{2} \Big].
 \]
 Assume that $f$ is a rational function. 
 We will first prove decidability of $\varepsilon$-optimal cost reachability problem using this assumption.
 Later we will drop this assumption.
 
 We now construct two piecewise linear price functions $f_l$ and $f_u$ such that $f_l(t) \leq f(t) \leq f_u(t)$ holds for $0 \leq t \leq T$. 
 Let $T \in \Rplus$ is a constant such that all clock valuations are bounded above by $T$.
 
 Let $\delta \in \mathbb{Q}^{+}, 0 < \delta \leq T$ be the sampling period. Choice for the value of $\delta$ is explained at the end of the proof.
 We sample $f$ at periodic intervals of $\delta$ in the interval $0 \leq t \leq T$. 
 We define a piecewise linear functions 
\[
 \begin{array}{lll}
   f_l(t) & = f(t) & \mbox{if} \ t = N \cdot \delta, \mbox{where} \ N \in \mathbb{N}\\
     & = \frac{f(N \cdot \delta) + f((N+1) \cdot \delta) - K\delta}{2} & \mbox{if} \ t \in \big( N\cdot \delta, (N+1) \cdot \delta \big), \mbox{where} \ N \in \mathbb{N}\\
     &&\\
   f_u(t) & = f(t) & \mbox{if} \ t = N \cdot \delta, \mbox{where} \, N \in \mathbb{N}\\
     & = \frac{f(N \cdot \delta) + f((N+1) \cdot \delta) + K\delta}{2} & \mbox{if} \ t \in (N\cdot \delta, (N+1) \cdot \delta), \mbox{where} \ N \in \mathbb{N}\\
  \end{array}
\]
 Let $\mathcal{A}$ be priced timed automaton with Lipschitz continuous price functions at all locations.
 We construct automata $\mathcal{A}_l$ and $\mathcal{A}_u$ by replacing price function at every location while keeping everything else unchanged. Specifically, if price function at location $\ell$ in $\mathcal{A}$ is $\pi^{(\ell)} = f$, then
 in $\mathcal{A}_l$, price at location $\ell$ is $\pi^{(\ell)}_l = f_l$. Likewise we assign price $\pi^{(\ell)}_u = f_u$ to location $\ell$ of $\mathcal{A}_u$.
 Observe that $\mathcal{A}_l$ and $\mathcal{A}_u$ are replicas of $\mathcal{A}$ except the difference in location price functions.
 Since, $\pi^{(\ell)}_l(t) \leq \pi^{(\ell)}(t) \leq \pi^{(\ell)}_u(t)$ holds for all locations $\ell$, 
 OptCost$_{\mathcal{A}_l}$($\ell$, $\ell'$) $\leq$ OptCost$_\mathcal{A}$($\ell$, $\ell'$) $\leq$ OptCost$_{\mathcal{A}_u}$($\ell$, $\ell'$) follows. 
 Now, for any single delay transition, $\sup \{\| \pi^{(\ell)}_u(t) - \pi^{(\ell)}_l(t) \| \} \leq \| K \delta \|$ over all $0 \leq t \leq T$.
 Let $D$ be the diameter of region graph, then $\sup \{\| \OptCost_{\mathcal{A}_l}(\ell, \ell') - \OptCost_{\mathcal{A}_u}(\ell, \ell') \| \} = \varepsilon \leq \| D K \delta \|$.
 This gives us the bound on $\varepsilon$. We choose $\delta = \frac{\| DK \|}{\varepsilon}$.
 
 In the above construction $f$ is evaluated only at sampling points. We can safely drop the rationality restriction of $f$ by approximating it by rational function
 $f'$ such that $\| f - f' \| \leq \frac{\| D K \delta \|}{2}$.
\qed
\end{proof}

\section{Step-Bounded Cost-Optimal Reachability Problem}
\label{sec:tool}
In this section we look into the following step-bounded cost-optimal
reachability problem for priced timed automata.
 \begin{definition}[Step-Bounded Cost-optimal reachability problem] 
   Let $\mathcal{A}$ be a priced timed automaton. 
   Given two locations $\ell, \ell'$ of $A$, step bound $N \in \Nat$, the step-bounded
   optimal cost $\OptCost_N(\ell, \ell')$, is defined as
   \[
   \OptCost_N(\ell, \ell') = \inf_{\rho \in \Runs_N(\ell, \ell')} C(\rho),
   \]
   where $\Runs_N(\ell, \ell')$ are the set of canonical runs between $\ell$ and $\ell'$ of
   length less than or equal to $N$. 
   Given priced timed automaton $\mathcal{A}$, locations $\ell$, $\ell'$, and a budget
   $B \in \Rplus$ the step-bounded cost-optimal reachability problem is to
   decide whether  $\OptCost_N(\ell, \ell') \leq B$. 
 \end{definition}
 In this section we extend the encoding of 
Audemard, Cimatti, Kornilowicz, and Sebastiani~\cite{bmc-ta} 
to solve step-bounded optimal-cost reachability
 problem for priced timed automata.
 After generating the encoding, we can feed it to SMT solver that support the
 theory corresponding to the price functions to solve the step-bounded
 cost-optimal  reachability problem. 
 
 \subsection{Audemard-Cimatti-Kornilowicz-Sebastiani Encoding for PTA}
\label{sec:encod}
Let $\mathcal{A}_1, \mathcal{A}_2, \ldots \mathcal{A}_n$ be the priced timed automata
which are composed into network of automata $\mathcal{A}$. These automata communicate using channels. Let $\eta$ be the set of 
channels used in $\mathcal{A}$. If $c$ is a channel, then $c!$ is send operation on the channel $c$ and $c?$ is the blocking 
receive operation on the channel $c$.

\noindent{\bf Original Encoding for Timed Automata.}
We generate SMT formula for each automata using encoding from Audemard
et. al. \cite{bmc-ta}. 
As per their scheme, 
we create one real variable for every clock and create separate one for an extra
variable named $z$, which keeps the track of global time. We add a variable
named $s$ of type bitvector at every step which denotes current location. 
Notation $s_\ell$ denotes assertion that current location is $\ell$. 
We also create two binary variables for each channel per automaton -- one for
send and one for receive. 
For example, if automaton $\mathcal{A}_2$ sends over channel $c$ in current
step, we set variable named $\mathcal{A}_2.c!$. 
This notation helps us to identify automaton which uses that channel in the
current step and the type (send or receive) of an operation performed on that
channel. 
We permit to use global clocks. 
While generating formula for $A$, it may happen that some of the automata share
clock names or location names. 
For example, automata  $\mathcal{A}_1$ and $\mathcal{A}_2$ may both have local
clocks named $y$. But we must distinguish between variables that were created 
to hold value of $y$ in $\mathcal{A}_1$ and value of $y$ in $\mathcal{A}_2$. 
We qualify all variables with name of automaton they are 
the part of. Here, we create real variables named $\mathcal{A}_1.y$ and
$\mathcal{A}_2.y$. 
All of these variables are created for every step of a run in a standard
bounded-model-checking fashion. 
Assertions in Fig.~\ref{fig:ta-encoding} 
describe encoding at current and next step in the formula. 
We represent next step variables in primed version. 
For further details refer to \cite{bmc-ta}. 
\begin{figure}
 \begin{equation} \label{eq:ta_switch}
  \bigwedge_{T=(\ell, \varphi, \lambda, \ell')} T \rightarrow \big( s_\ell \wedge \varphi \wedge s'_{\ell'} \wedge
    \bigwedge_{x \in \lambda} (x' = z') \wedge \bigwedge_{x \notin \lambda} (x' = x) \wedge (z' = z)
  \big)
 \end{equation}
 \begin{equation} \label{eq:ta_delay}
  T_\delta \rightarrow \big( (s_\ell = s'_\ell) \wedge (z'-z < 0) \wedge
    \bigwedge_{x \in X} (x' = x) \wedge \bigwedge_{a \in \eta} (\neg a)
  \big)
 \end{equation}
 \begin{equation} \label{eq:ta_null}
  T_{\mbox{null}} \rightarrow \big( (s_\ell = s'_\ell) \wedge (z'=z) \wedge
    \bigwedge_{x \in X} (x' = x) \wedge \bigwedge_{a \in \eta} (\neg a)
  \big)
 \end{equation} 
 \begin{equation} \label{eq:ta_sometrans}
  T_{\mbox{null}} \vee T_\delta \vee \bigvee_{T \in E} T
 \end{equation}  
 \begin{equation} \label{eq:price_init}
 \mbox{price}_0 = 0  
 \end{equation}
\begin{equation} \label{eq:price_switch}
\bigwedge_{T \in E} T \rightarrow \big(\mbox{price}' = \mbox{price} + \psi(T)\big)
 \end{equation} 
 \begin{equation} \label{eq:price_delay}
\bigwedge_{\ell \in L} T_\delta \wedge s_\ell \rightarrow \big(\mbox{price}' = \mbox{price} + \pi(\ell, {\bf x}, z-z')\big)
 \end{equation}
 \begin{equation} \label{eq:price_null}
 T_{\mbox{null}} \rightarrow (\mbox{price'} = \mbox{price})
 \end{equation}
 \begin{equation} \label{eq:price_reach}
 \pricevar^{(n)} \Join k
 \end{equation}
 \label{fig:ta-encoding}
 \vspace{-0.5cm}
 \caption{SMT assertions for priced timed automata}
\end{figure}

\noindent{\bf Extension for Priced Timed Automata.}
Let $\mathcal{A}$ = ($L$, $X$, $E$, $I$, $\pi$, $\psi$) be priced timed
automaton. 
To keep our encoding as general as possible, we  describe our SMT formula
generation for general priced timed automata. 
Observe that this class of automata subsume LPTA, concavely-priced PTA,
piecewise-linear PTA, and Lipschitz-continuous priced PTA. 
For each automaton, we represent current accumulated price using real variable
named  $\pricevar$. 
We introduce variables $\pricevar_k$ at each step.  
Initially $\pricevar_0$ is set to zero as in \ref{eq:price_init}. 
When switch transition occurs, we update the $\pricevar$ using
equation~\ref{eq:price_switch}.   
The function $\psi(T)$ denotes edge price for the transition $T$. 
Equation \ref{eq:price_delay} is used to specify prices for each delay
transition. 
Quantity $(z-z')$ is the delay incurred at  current step and ${\bf x}$ is vector
of current clock valuations. 
As price functions are location dependent, we add clause $s_\ell$ to check
whether current location is $\ell$ and then update price accordingly. 
For null transitions, prices at current and previous step are identical. 
To decide whether accumulated price at step $n$ satisfies the condition
$\pricevar^{(n)} \Join k$, where $\Join \; \in \{ <, \leq, >, \geq, =, \neq\}$,
we add an assertion as per Eq.~\ref{eq:price_reach}. 

\subsection{Experimental Results}
\label{sec:emp}
\begin{table}[t]
 \caption{Comparison of the performance of our tool with UPPAAL-Cora is shown for
   ALP problem with $8$, $9$, and $10$ runways, with varying number of
   airplanes. We report running time (in seconds) for our algorithm (Z3) and
   DFS and random options for UPPAAL-Cora. TO stands for timeout (${>} 30$
   mins). }
 \label{tab:cab}\centering
 \begin{tabular}{|c|c|c|c|c|c|c|c|c|c|c|}
   \hline
   Airplanes & \multicolumn{3}{|c|}{8 runways} & \multicolumn{3}{|c|}{9 runways} & \multicolumn{3}{|c|}{10 runways}\\
   \hline
   & Z3 & CORA & CORA& Z3 & CORA & CORA & Z3 & CORA & CORA\\
   &  & DFS & Random & & DFS & Random & & DFS & Random\\
   \hline
   1 & 0.12 & $<$ 0.1  & $<$ 0.1 & 0.4 & $<$ 0.1 & $<$ 0.1& 0.30 & $<$ 0.1& $<$ 0.1\\
   \hline
   2 & 0.09 & $<$ 0.1 & $<$ 0.1 & 0.57 & $<$ 0.1 & $<$ 0.1 & 0.76 & $<$ 0.1 & $<$ 0.1\\
   \hline
   3 & 0.44 & $<$ 0.1 & $<$ 0.1 & 2.52 & $<$ 0.1 & $<$ 0.1 & 2.31 & $<$ 0.1 &$<$ 0.1\\
   \hline
   4 & 4.28 & 2.4 & 0.04 & 6.73 & 4.18 & 0.08 & 5.86 & 7.81 &0.06\\
   \hline
   5 & 2.73 & 278.21 & 0.7 & 9.61 & 679.27 & 0.1 & 5.09 & TO & 0.05\\
   \hline
   6 & 22.28 & TO & 0.16 & 21.34 & TO & 0.45 & 20.68 & TO & 0.32\\
   \hline
   7 & 29.23 & TO & 0.23 & 201.15 & TO & 1.15& 152.03 & TO & 1.36\\
   \hline
   8 & 89.27 & TO & 0.79 & 86.1 & TO & 1.85 & 94.88 & TO & 5.12\\
   \hline
   9 & 331 & TO & 35.09 & 103.62 & TO & 151.84 & 1650.05 & TO & 277.38\\
   \hline
   10 & 889 & TO & 36.42 & 667.33 & TO & 49.04 & 1309.67 & TO & 230.69\\
   \hline 
 \end{tabular}
\end{table}
We implemented the encoding discussed in the previous subsection as a
vtool~\cite{bhavetool} for analyzing step-bounded optimal-cost for PTA. 
Out tool invokes state-of-the-art theorem prover Z3~\cite{de2008z3} from
Microsoft Research. 
It supports linear and non-linear arithmetic, bit-vectors, arrays, data-types,
and quantifiers.
For our purpose, Z3 can be used to solve price functions that are given as a
polynomial of time-delay and the current valuation.  
Other non-linear price functions such as $\log$, $\sin$, $\cos$, and exp can be
accommodated in this framework using corresponding Taylor series
approximations. 

In order to show experimental results, we concentrate the standard Airport
Landing Scheduling Problem (ALP) from~\cite{behrmann2005optimal}. 
In order to give comparison with an existing tool we keep the price function
linear and compare our tool with state-of-the-art optimal-cost reachability tool
Uppaal-Cora~\cite{cora}. 

\noindent{\bf Airport Landing Scheduling Problem.}
Given number of airplanes each with attributes like type of airplane, landing
time window and number of runways, assign a landing time and runways to each
airplane such that all airplanes land within their specific landing time window
and also comply with  safety regulations like mandatory wake turbulence
separation delay. There are two possible sources of costs. If airplane travels
faster than its designated speed, it lands earlier but consumes more fuel. If
airplane landing is delayed, it suffers fuel costs for circling over the airport.

ALP is known to suffer exponential blowup with increasing runways
\cite{behrmann2005optimal}. 
We used the instances of ALP problem which are distributed with Uppaal CORA demo
version. 
We asked whether there is a schedule such that all airplanes land and total
cost is bounded from above by a fixed budget (800).
Table~\ref{tab:cab} shows the results of our experiments. 
We ran all our experiments on 64-bit Intel$^{\tiny \textregistered}$  Xeon$^{\tiny
  \textregistered}$ CPU E5-2660 v2 running at 2.20GHz with 64 GB RAM. We fixed
time limit to 30 minutes for each problem and used single threaded Z3 SMT solver
(v 4.3.2).

\section{Conclusion and Future Work}
We studied priced timed automata with non-linear prices and showed the
undecidability of a general class of polynomially-priced timed automata.  
We then introduced piecewise-linear and Lipschitz-continuous price functions, and
recovered decidability in this restricted setting. 
We also studied step-bounded cost-optimal reachability problem for price timed
automata, and implemented an SMT based tool to solve this problem. 
This problem is of interest since the optimal-cost reachability problem in
some cases (under structurally non-Zeno restriction on timed automata along 
with non-negativity restriction on prices)
reduces to step-bounded reachability problem. 

Observe that, although our tool does not perform as well as \texttt{random-optimal} option
of UPPAAL-Cora, it outperforms both \texttt{dfs} and \texttt{bfs} (not reported
here). 
As a future work, we plan to exploit randomization to scale the performance of
our implementation. 
We believe that these experiments presented here demonstrate the
applicability of SMT-based step-bounded verification methodology for
medium-sized examples of priced timed automata.

\end{document}